\newtheorem{definition}{Definition}[section]
\newtheorem{problem}[definition]{Problem}
\newtheorem{theorem}[definition]{Theorem}
\newtheorem{lemma}[definition]{Lemma}
\theoremstyle{definition}
\newtheorem{example}[definition]{Example}
\newcommand{\recht}[1]{\operatorname{#1}}
\newcommand{\pp}{\mathrm{p}}
\newcommand{\tOR}{\mathtt{OR}}
\newcommand{\tAND}{\mathtt{AND}}
\newcommand{\tBE}{\mathtt{BE}}
\newcommand{\BE}[1]{\recht{BE}_{#1}}
\newcommand{\struc}[1]{\recht{S}_{#1}}
\newcommand{\ch}{\recht{ch}}
\newcommand{\RRR}[1]{\recht{R}_{#1}}
\newcommand{\CS}[1]{\recht{CS}_{#1}}
\newcommand{\tzero}{\mathtt{0}}
\newcommand{\tone}{\mathtt{1}}
\newcommand{\BB}{\mathbb{B}}
\newcommand{\prob}{\mathbb{P}}
\newcommand{\idom}{\recht{id}}
\newcommand{\SFPA}{\mathcal{A}}
\newcommand{\FF}[1]{\mathsf{L}_{#1}}
\newcommand{\mon}[1]{\langle#1\rangle}
\newcommand{\SQFU}{\mathtt{SFPA}}
\newcommand{\SQFUU}{\mathtt{SFPA2}}
\newcommand{\tCBE}{\mathtt{CBE}}
\newcommand{\CBE}[1]{\recht{CBE}_{#1}}
\newcommand{\RR}{\mathbb{R}}
\begin{document}

\pagestyle{plain}

\title{Fault tree reliability analysis via squarefree polynomials}

\author{\authorname{Milan Lopuhaä-Zwakenberg}
\affiliation{University of Twente, Enschede, the Netherlands}
\email{m.a.lopuhaa@utwente.nl}
}

\keywords{Fault trees, reliability analysis, polynomial algebra}

\abstract{Fault tree (FT) analysis is a prominent risk assessment method in industrial systems. Unreliability is one of the key safety metrics in quantitative FT analysis. Existing algorithms for unreliability analysis are based on binary decision diagrams, for which it is hard to give time complexity guarantees beyond a worst-case exponential bound. In this paper, we present a novel method to calculate FT unreliability based on algebras of squarefree polynomials and prove its validity. We furthermore prove that time complexity is low when the number of multiparent nodes is limited. Experiments show that our method is competitive with the state-of-the-art and outperforms it for FTs with few multiparent nodes.}

\onecolumn \maketitle \normalsize \setcounter{footnote}{0} \vfill

\section{Introduction}

\noindent \textit{Fault trees}. Fault trees (FTs) form a prominent risk assessment method to categorize safety risks on industrial systems. A FT is a hierarchical graphical model that shows how failures may propagate and lead to system failure. Because of its flexibility and rigor, FT analysis is incorporated in many risk assessment methods employed in industry, including FaultTree+ \cite{FT+} and TopEvent FTA \cite{TopEventFTA}.	

A FT is a directed acyclic graph (not necessarily a tree) whose root represents system failure. The leaves are called \emph{basic events} (BEs) and represent atomic failure events. Intermediate nodes are AND/OR-gates, whose activation depends on that of their children; the system as a whole fails when the root is activated. An example is given in Fig.~\ref{fig:ex}.

\noindent \textit{Quantitative analysis}. Besides a qualitative analysis of what sets of events cause overall system failure, FTs also play an important role in \emph{quantitative risk analysis}, which seeks to express the safety of the system in terms of safety metrics, such as the total expected downtime, availability, etc. An important safety metric is \emph{(un)reliability}, which, given the failure probability of each BE, calculates the probability of system failure. As the size of FTs can grow into the hundreds of nodes \cite{ruijters2019ffort}, calculating the unreliability efficiently is crucial for giving safety and availability guarantees.

\begin{figure} 
\centering
\includegraphics[width=6cm]{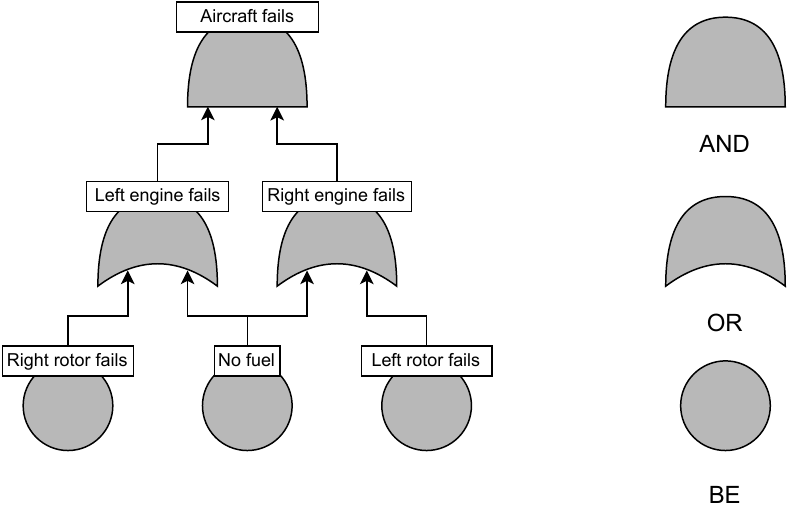}
\caption{A fault tree for a small aircraft. The aircraft fails if both its engines fail; each engine fails if either its rotor fails or it has no fuel (the plane has a single fuel tank).} \label{fig:ex}
\end{figure}

There exist two main approaches to calculating unreliability \cite{ruijters2015fault}. The first approach works bottom-up, recursively calculating the failure probability of each gate. This algorithm is fast (linear time complexity), but only works as long as the FT is actually a tree. However, nodes with multiple parents (DAG-like FTs) are necessary to model more intricate systems. For such FTs, as we show in this paper, calculating unreliability is NP-hard. The main approach for such FTs is based on translating the FT into a binary decision diagram (BDD) and performing a bottom-up analysis on the BDD. This BDD is of worst-case exponential size, though heuristics exist. The BDD corresponding to an FT depends on a linear ordering of the BEs, with different orderings yielding BDDs of wildly varying size; although any single one of them can be used to calculate unreliability, finding the optimal BE ordering is an NP-hard problem in itself. As a result, it is hard to give guarantees on the runtime of this unreliability calculation algorithm in terms of properties of the FT.

\noindent \textit{Contributions.} In this paper, we present a radical new way for calculating unreliability for general FTs. The bottom-up algorithm does not work for DAG-like FTs, since it does not recognize multiple copies of the same node in the calculation, leading to double counting. In our approach we amend this by keeping track of nodes with multiple parents, as these may occur twice in the same calculation. Then, instead of propagating failure probabilities as real numbers, we propagate squarefree polynomials whose variables represent the failure probabilities of nodes with multiple parents; keeping these formal variables allows us to detect and account for double counting. Furthermore, to keep complexity down we replace formal variables with real numbers whenever we are able. At the root all formal variables have been substituted away, yielding the unreliability as a real number.

This approach has as advantage over BDD-based algorithms that we are able to give upper bounds to computational complexity. Most of the complexity comes from the fact that we do arithmetic with polynomials rather than real numbers. However, if the number of multiparent nodes is limited, these polynomials have limited degree, and computation is still fast. We prove this formally, by showing that the time complexity is linear when the number of multiparent nodes is bounded, and in experiments, in which we compare our method to Storm-dft \cite{basgoze2022bdds}, a state-of-the-art tool for FT analysis using a BDD-based approach: here our method is competitive in general and is considerably faster for FTs with few multiparent nodes.

Summarized our contributions are the following:
\begin{enumerate}
\item A new algorithm for fault tree reliability analysis based on squarefree polynomial algebras;
\item A proof of the algorithm's validity and bounds on its time complexity;
\item Experiments comparing our algorithm to the state-of-the-art.
\end{enumerate}

An artefact of the experiments is available at \cite{code}.
%Furthermore, a version of this paper including proofs of the mathematical statements is available at \cite{full}.

\section{Related work}

There exists a considerable amount of work on FT reliability analysis. FTs were first introduced in \cite{watson1961launch}. A bottom-up algorithm to calculate their reliability is presented in \cite{ruijters2015fault}, which is a formalization of mathematical principles that have been used since the beginning of FT analysis. This algorithm works only for FTs that are actually trees, i.e., do not contain nodes with multiple parents. In \cite{rauzy1993new}, a BDD-based method for calculating reliability was introduced that also works for DAG-shaped FTs. This method is still the state of the art, although some improvements have been made, notably by tweaking variable ordering to obtain smaller BDDs \cite{bouissou1997bdd} and dividing the FT into so-called modules that can be handled separately \cite{rauzy1997exact}.

FTs are also used to model the system's reliability over time. The failure time of each BE is modeled as a random variable (typically exponentially distributed), and additional gate types are introduced to model more elaborate timing behavior. Reliability analysis of these \emph{dynamic FTs} is done via stochastic model checking \cite{basgoze2022bdds}.

\section{Fault trees}

In this section, we give the formal definition of fault trees (FTs) and their reliability used in this paper. For us FTs are static (only AND/OR gates), and each basic event $v$ is assigned a failure probability $\pp(v)$. Thus we define:

\begin{definition}
A \emph{fault tree} (FT) is a tuple $T = (V,E,\gamma,\pp)$ where:
\begin{itemize}
    \item $(V,E)$ is a rooted directed acyclic graph;
    \item $\gamma$ is a function $\gamma\colon V \rightarrow \{\tOR,\tAND,\tBE\}$ such that $\gamma(v) = \tBE$ if and only if $v$ is a leaf;
    \item $\pp$ is a function $\pp\colon \BE{T} \rightarrow [0,1]$, where $\BE{T} = \{v \in V \mid \gamma(v) = \tBE\}$.
\end{itemize}
\end{definition}

Note that a FT is not necessarily a tree as gates may share children. The root of $T$ is denoted $\RRR{T}$. For a node $v$, we let $\ch(v)$ be the set of children of $v$.

The \emph{structure function} determines, given a gate and a safety event, whether the event succesfully propagates to the gate. Here we model a safety event as the set of BEs happening, which can be encoded as a binary vector $\vec{f} \in \BB^{\BE{T}}$, where $f_v = \tone$ denotes that the BE $v$ occurs in the event. The structure function is then defined as follows.

\begin{definition}
Let $T = (V,E,\gamma,\pp)$ be a FT.
\begin{enumerate}
    \item A \emph{safety event} is an element of $\BB^{\BE{T}}$.
    \item The \emph{structure function} of $T$ is the function $\struc{T}\colon V \times \BB^{\BE{T}} \rightarrow \BB$ defined recursively by
    \[
\struc{T}(v,\vec{f}) = \begin{cases}
    f_v, & \textrm{ if $\gamma(v) = \tBE$,}\\
    \bigvee_{w \in \ch(v)} \struc{T}(w,\vec{f}), & \textrm{ if $\gamma(v) = \tOR$},\\
    \bigwedge_{w \in \ch(v)} \struc{T}(w,\vec{f}), & \textrm{ if $\gamma(v) = \tAND$}.
\end{cases}
    \]
    \item A safety event $\vec{f}$ such that $\struc{T}(\RRR{T},\vec{f}) = \tone$ is called a \emph{cut set}; the set of all cut sets is denoted $\CS{T}$.
\end{enumerate}
\end{definition}

Quantitative analysis of a FT is typically done via its \emph{unreliability}, i.e., the probability of a cut set occurring, where each BE $v$ has probability $\pp(v)$ of happening. The BE failure probabilities are considered to be independent. The reasoning behind this is that when they are not independent, this is due to some common cause; this common cause should then be explicitely modeled in the FT framework, by replacing the two non-independent BEs with sub-FTs that share common nodes \cite{pandey2005fault}.

\begin{definition} \label{def:unreliability}
Let $T = (V,E,\gamma,\pp)$ be a FT. Let $\vec{F} \in \BB^{{\BE{T}}}$ be the random variable defined by $\prob(F_v = \tone) = \pp(v)$ for all $v \in \BE{T}$, and all these events are independent. Then the \emph{unreliability} of $T$ is defined as
\begin{align*}
U(T) &= \prob(\vec{F} \in \CS{T}) \\
&= \sum_{\vec{f} \in \CS{T}} \prod_{v\colon f_v = \tone} \pp(v)\prod_{v\colon f_v = \tzero} (1-\pp(v)).
\end{align*}
\end{definition}

\begin{example}
Consider the FT $T$ from Fig.~\ref{fig:ex}. Abbreviating BE names, assume $\pp(\mathsf{rrf}) = \pp(\mathsf{lrf}) = 0.4$ and $\pp(\mathsf{nf}) = 0.3$. Furthermore, write $\vec{f} \in \BB^{\BE{T}}$ as $f_{\mathsf{rrf}}f_{\mathsf{nf}}f_{\mathsf{lrf}}$. Then $\CS{T} = \{010,011,101,110,111\}$, so
\begin{align*}
U(T) &= 0.6 \cdot 0.3 \cdot 0.6 \\
&+ 0.6 \cdot 0.3 \cdot 0.4 \\
&+ 0.4 \cdot 0.7 \cdot 0.4 \\
&+ 0.4 \cdot 0.3 \cdot 0.6 \\
&+ 0.4 \cdot 0.3 \cdot 0.4 = 0.412.
\end{align*}
\end{example}

The unreliability $U(T)$ represents the probability of failure of the system modeled by the fault tree and is crucuial to providing safety and availability guarantees. The expression in Definition \ref{def:unreliability} becomes too large to handle for large FTs very quickly; thus it is important to find efficient solutions to the following problem.

\begin{problem} \label{prob:ut}
Given a FT $T$, calculate $U(T)$.
\end{problem}

Unfortunately, we show that this  problem is NP-hard. The reason for this is that with appropriately chosen probabilities, one can find from $U(T)$ a minimal element of $\CS{T}$, and finding such a so-called \emph{minimal cut set} is known to be NP-hard \cite{rauzy1993new}.

\begin{theorem} \label{thm:NP}
Problem \ref{prob:ut} is NP-hard.
\end{theorem}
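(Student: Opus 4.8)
The plan is to give a polynomial-time Turing reduction from an NP-complete problem to the computation of $U(T)$, following the idea that a single, carefully chosen assignment of failure probabilities turns $U(T)$ into a generating function for the cut sets graded by cardinality. I would reduce from \textsc{Vertex Cover}: given a graph $G = (V_G, E_G)$ and an integer $k$, decide whether $G$ has a vertex cover of size at most $k$.

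First I would encode $G$ as a FT $T_G$. Its basic events are the vertices of $G$; for each edge $\{u,v\} \in E_G$ there is an $\tOR$-gate with children $u$ and $v$; and the root $\RRR{T_G}$ is an $\tAND$-gate whose children are all these edge-gates. Because a vertex typically lies on several edges, $T_G$ is a genuine DAG with multiparent nodes — exactly the regime the paper identifies as hard. A direct induction on this construction shows that for $S \subseteq \BE{T_G}$, the structure function $\struc{T_G}(\RRR{T_G},\cdot)$ evaluates to $\tone$ on the indicator vector of $S$ iff every edge has an endpoint in $S$, i.e. iff $S$ is a vertex cover of $G$. Hence $\CS{T_G}$ is exactly the set of vertex covers, and the minimum cardinality of an element of $\CS{T_G}$ equals the minimum vertex cover number of $G$.

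Next comes the probability trick. Writing $n = |\BE{T_G}|$ and setting $\pp(v) = q$ for every basic event and some common $q \in (0,1)$, Definition \ref{def:unreliability} gives
\[
U(T_G) = \sum_{j=0}^{n} c_j\, q^j (1-q)^{n-j},
\]
where $c_j$ is the number of cut sets of cardinality $j$. The key observation is that, viewed as a polynomial in $q$, the right-hand side vanishes to order exactly $m := \min\{j : c_j > 0\}$ at $q = 0$, since the lowest-degree monomial contributed by the $j$-th summand is $c_j q^j$; thus the coefficient of $q^m$ is $c_m > 0$ while all lower coefficients vanish. Consequently $m$, the minimum cut set size (= minimum vertex cover number), is encoded in the order of vanishing of $U(T_G)$ at $0$, and I would extract it from a single evaluation: choosing $q = 1/N$ with $N = 2^{2n+4}$ and using the crude bound $c_j \le 2^n$ to control the tail $\sum_{i>m} a_i N^{-i}$, one checks that $U(T_G)$ lies in an interval around $N^{-m}$ narrow enough that consecutive values of $m$ yield disjoint intervals, so that $m$ is determined by $U(T_G)$. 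All quantities have $O(n^2)$ bits, so the reduction is polynomial, and one answers the \textsc{Vertex Cover} instance affirmatively iff $m \le k$.

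I expect the only delicate step to be making this last extraction rigorous: one must verify that the polynomially-small $q$ (equivalently, polynomially many bits) genuinely separates the possible values of $m$ despite the coefficients $c_j$ being as large as $2^n$, and that reading $m$ off from the rational number $U(T_G)$ is itself a polynomial-time computation. The graph construction and the identification of $\CS{T_G}$ with the vertex covers of $G$ are routine, and the NP-hardness of \textsc{Vertex Cover} (equivalently, of computing a minimum cut set \cite{rauzy1993new}) is classical, so these require no further work.
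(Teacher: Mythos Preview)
Your argument is correct and follows a genuinely different route from the paper's own proof. Both proofs exploit the same mechanism---choose the BE probabilities so that the single real number $U(T)$ encodes enough combinatorial information about the cut sets to answer an NP-hard question---but they encode different information. The paper assigns \emph{distinct} probabilities $\pp(v_i)=10^{-2^i}$, so that $-\lfloor\log_{10}U(T)\rfloor$ is an integer whose binary digits literally spell out a particular minimal cut set; it then appeals to the NP-hardness of the MCS search problem \cite{rauzy1993new}. You instead assign a \emph{uniform} probability $q$ and read off only the minimum cut-set \emph{size} from the order of vanishing of $U(T_G)$ at $q=0$, reducing directly from \textsc{Vertex Cover}. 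Your version is more elementary and has the practical advantage that the probabilities (hence the constructed FT) have only polynomially many bits, whereas the paper's choice $10^{-2^i}$ needs $\Theta(2^n)$ bits in any standard encoding---so the paper's reduction is really a polynomial-time reduction only in a real-RAM model or with a succinct probability representation. The price you pay is that you recover less (the size, not the set), but that is all that is needed for the hardness statement.
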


\subsection{Existing $U(T)$ algorithms}

There are two prominent algorithms for calculating $U(T)$. The first one calculates, for each node $v$, the probability $g_v = \prob(\struc{T}(v,\vec{F}) = \tone)$ bottom-up \cite{ruijters2015fault}. For BEs one has $g_v = \pp(v)$. For an AND-gate, one has $g_v = \prod_{w \in \ch(v)} g_w$ as long as the events $\struc{T}(w,\vec{F}) = \tone$ are independent as $w$ ranges over all children of $v$. This happens when no two children of $v$ have any shared descendants. For OR-gates one likewise has $g_v = 1-\prod_{w \in \ch(v)} g_w$. This gives rise to a linear-time algorithm that calculates $g_v$ bottom-up. Unfortunately, this algorithm only works for FTs that have a tree structure: as soon as a node has multiple parents, the independence assumption will be violated at some point in the calculation.

The second algorithm \cite{rauzy1993new} works for general FTs, and works by translating the Boolean function $\struc{T}(\RRR{T},-)$ into a \emph{binary decision diagram}, which is a directed acyclic graph, in which the evaluation of the function at a boolean vector is represented by a path through the graph. After the BDD is found, $U(T)$ can be calculated using a bottom-up algorithm on the BDD, whose time complexity is linear in the size of the BDD. Unfortunately, the size of the BDD is worst-case exponential, although this worst case is seldomly attained in practice \cite{rauzy1997exact,bobbio2013methodology}. To construct the BDD, one first has to linearly order the variables; finding the order that minimizes BDD size is an NP-hard problem, although heuristics exist \cite{valiant1979complexity,le2014novel}.  

\section{An example of our method}

\begin{figure}
    \centering
    \includegraphics[width=4cm]{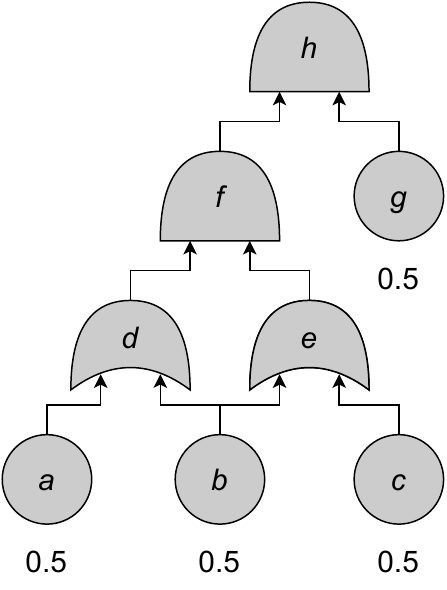}
    \caption{An example FT with failure probabilities.}
    \label{fig:algo_ex}
\end{figure}

Before we dive into the details of our method, we go through an example to showcase how the method works and to motivate the technical sections. 

Consider the FT of Fig.~\ref{fig:algo_ex}. In a bottom-up method, we calculate the failure probability $g_v$ of each node $v$: thus for the BEs we have $g_a = g_b = g_c = g_g = 0.5$. For $d$, the bottom-up method dictates that we should calculate $g_d = g_a+g_b-g_ag_b$. However, this will cause problems at $f$, since $d$ and $e$ share $b$ as child, and so their failure probabilities will not be independent. Thus, at $d$, we modify $g_d$ to `remember' its dependence on $b$. We do so by introducing a formal variable $\FF{b}$ representing $b$'s failure probability, yielding $g_d = g_a+\FF{b}-g_a\FF{b} = 0.5+0.5\FF{b}$. We also get $g_e = 0.5+0.5\FF{b}$. Note that we only introduce a formal variable for $b$, and not for $a$ and $c$, as the latter only have one parent node and thus have no chance of appearing twice in the same calculation. 

At $f$, we calculate $g_f = g_dg_e = 0.25+0.5\FF{b}+0.25\FF{b}^2$. Here we introduce the rule $\FF{b}^2 = \FF{b}$, so that $g_f = 0.25+0.75\FF{b}$. The idea behind this is that we use multiplication to determine the possibility of two events occurring simultaneously. Since $\FF{b}$ represents $\prob(\struc{T}(b,\vec{F}) = \tone)$, the term $\FF{b}^2$ actually represents $\prob(\struc{T}(b,\vec{F}) = \tone \wedge \struc{T}(b,\vec{F}) = \tone)$. This is equal to $\prob(\struc{T}(b,\vec{F}) = \tone)$, so $\FF{b}^2=\FF{b}$.

At this point, the graph structure tells us that $b$ cannot appear twice in the same calculation any more. Thus we can safely substitute $g_b=0.5$ for $\FF{b}$ in $g_f = 0.25+0.75\FF{b}$, yielding $g_f = 0.625$. Finally, we get $g_h = g_fg_g = 0.3125$.

In the following sections, we introduce two mathematical tools needed to apply this method in greater generality. In Section \ref{sec:dom} we review the graph-theoretic notion of dominators, which will tell us for which nodes we need to introduce formal variables and at what point they can be substituted away. In Section \ref{sec:sfpa} we formalize the polynomial algebra in which our arithmetic takes place.

\section{Preliminaries I: Dominators} \label{sec:dom}

In this section we review the concept of dominators and apply them to FTs. We need this to determine at what point in the bottom-up calculation, outlined in the previous section, we can replace a formal variable $\FF{v}$ with the expression $g_v$. Informally, a dominator of $v$ is present on all paths from the root to $v$.

\begin{definition} \emph{\cite{prosser1959applications}}
Let $T = (V,E,\gamma,\pp)$ be a FT.
\begin{enumerate}
\item Define a partial order $\preceq$ on $V$ by $x \preceq y$ iff there is a path $y \rightarrow x$ in $T$.
\item Given two nodes $v,w \in V$, we say that $w$ \emph{dominates} $v$ if $v \prec w$ and every path $\RRR{T} \rightarrow v$ in $T$ contains $w$.
\end{enumerate}
\end{definition}

The set of dominators of a node is nonempty and has a minimum:

\begin{lemma} \emph{\cite{lengauer1979fast}}
If $v \neq \RRR{T}$, then there is a unique $w$ dominating $v$ such that each $w'$ dominating $v$ satisfies $w \preceq w'$; this $w$ is called the \emph{immediate dominator of $v$}, denoted $w = \idom(v)$. \qed
\end{lemma}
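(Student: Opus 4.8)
The plan is to show that the set $D$ of all dominators of $v$ forms a nonempty finite chain under $\preceq$. Since every finite nonempty totally ordered set has a unique least element, that least element will be precisely the immediate dominator $\idom(v)$ asserted by the lemma.

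First I would establish that $D \neq \emptyset$ by verifying that the root itself dominates $v$. Because $(V,E)$ is rooted, $v$ is reachable from $\RRR{T}$, so $v \preceq \RRR{T}$, and $v \neq \RRR{T}$ upgrades this to $v \prec \RRR{T}$; moreover every path $\RRR{T} \rightarrow v$ trivially contains its starting vertex $\RRR{T}$. Hence $\RRR{T} \in D$, and since $V$ is finite, $D$ is a finite nonempty subset of $V$.

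The heart of the argument is the claim that $\preceq$ restricts to a total order on $D$, i.e., any two dominators $w_1, w_2$ of $v$ are $\preceq$-comparable. I would argue by contradiction: assume neither $w_1 \preceq w_2$ nor $w_2 \preceq w_1$ (in particular $w_1 \neq w_2$). Fix a path $P$ from $\RRR{T}$ to $v$; since both $w_1$ and $w_2$ dominate $v$, both lie on $P$, and since $T$ is acyclic $P$ is vertex-simple, so one of them, say $w_1$, occurs strictly before the other, $w_2$, along $P$. Now take any path $Q$ from $\RRR{T}$ to $w_2$ (such paths exist, e.g. the prefix of $P$ up to $w_2$) and splice it with the tail of $P$ running from $w_2$ to $v$; the result is a path $\RRR{T} \rightarrow v$, which by domination must contain $w_1$. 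Because $w_1$ occurs strictly before $w_2$ on the simple path $P$, it does not lie on the tail of $P$ from $w_2$ to $v$; hence $w_1$ must lie on $Q$. As $Q$ was arbitrary, $w_1$ lies on every path $\RRR{T} \rightarrow w_2$, and the sub-path of $P$ from $w_1$ to $w_2$ witnesses $w_2 \prec w_1$, so $w_2 \preceq w_1$ — contradicting the assumption.

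Finally, from comparability I would conclude that $D$ is a finite nonempty chain, hence has a least element $w$ satisfying $w \preceq w'$ for every $w' \in D$, and a least element of a poset is unique; this $w$ is the asserted $\idom(v)$. The main obstacle is the comparability step, and within it the delicate point is the path-splicing construction: there I rely essentially on acyclicity, both to guarantee that the "before/after" ordering of $w_1, w_2$ along $P$ is well defined and to ensure $w_1$ cannot reappear on the spliced-in tail, which is what forces $w_1$ onto the arbitrary prefix $Q$.
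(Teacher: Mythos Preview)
The paper does not prove this lemma at all: it is quoted from \cite{lengauer1979fast} and closed immediately with a \qed, so there is no in-paper argument to compare against. Your proof is correct and is the standard one.

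One remark: your splicing step is more work than the stated lemma requires. As soon as you fix a path $P$ from $\RRR{T}$ to $v$ and observe that both dominators $w_1,w_2$ lie on the simple path $P$ with (say) $w_1$ before $w_2$, the sub-path of $P$ from $w_1$ to $w_2$ already witnesses $w_2 \preceq w_1$, contradicting incomparability. The additional argument that $w_1$ lies on \emph{every} path $\RRR{T}\to w_2$ actually proves the stronger fact that the dominators of $v$ are totally ordered by the domination relation itself, not merely by $\preceq$; that is true and useful elsewhere, but not needed here.
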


\begin{example}
In Fig.~\ref{fig:algo_ex}, the dominators of $a$ are $d$, $f$, and $h$, and $\idom(a) = d$. The dominators of $b$ are $f$ and $h$, and $\idom(b) = f$. Note that $d$ is not a dominator of $b$, since the path $h \rightarrow f \rightarrow e \rightarrow b$ does not pass through $d$.
\end{example}

The immediate dominator is interesting to us since, as we will see later, at $\idom(v)$ we can replace $\FF{v}$ by $g_v$. The following result relates the relative position of $v$ and $w$ to that of their immediate dominators.

\begin{lemma} \label{lem:idom}
If $v \prec w$, then either $\idom(v) \preceq w$ or $\idom(w) \preceq \idom(v)$.
\end{lemma}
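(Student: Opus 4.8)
The plan is to split on whether the immediate dominator $\idom(v)$ already lies below $w$. If $\idom(v) \preceq w$, the first alternative holds and there is nothing to prove, so the real content lies in the complementary case: assuming $\idom(v) \not\preceq w$, I would aim to establish the second alternative $\idom(w) \preceq \idom(v)$. My strategy for this is to show that $\idom(v)$ is itself a dominator of $w$; once that is done, the minimality property of the immediate dominator (the preceding lemma) applied to $w$ immediately yields $\idom(w) \preceq \idom(v)$.

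To prove that $\idom(v)$ dominates $w$, I would verify its two defining conditions at once. First I would take an arbitrary path $\RRR{T} \to w$ and, using $v \prec w$, append a path $w \to v$ to obtain a path $\RRR{T} \to v$. A short antisymmetry argument shows the two pieces meet only at $w$ (a shared vertex $u$ would satisfy both $u \preceq w$ and $w \preceq u$, forcing $u = w$ in an acyclic graph), so the concatenation is a genuine path. Since $\idom(v)$ dominates $v$, it must occur somewhere on this concatenated path. The crux is to argue that it cannot lie on the appended $w \to v$ segment: any vertex on that segment is reachable from $w$, hence $\preceq w$, contradicting the standing assumption $\idom(v) \not\preceq w$. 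Therefore $\idom(v)$ lies on the original $\RRR{T} \to w$ portion, and since $\idom(v) \neq w$ (again by the assumption), it occurs strictly before $w$. This simultaneously shows that every path $\RRR{T} \to w$ passes through $\idom(v)$ and that $w \prec \idom(v)$, which are exactly the conditions for $\idom(v)$ to dominate $w$.

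The step I expect to be the main obstacle is pinning down that $\idom(v)$ lands on the root-to-$w$ part rather than on the $w$-to-$v$ part; this is where the case hypothesis $\idom(v) \not\preceq w$ does all the work, and I would phrase it so that it also rules out the degenerate possibility $\idom(v) = w$. The point that the concatenation is a simple path is routine given antisymmetry of $\preceq$ in a DAG, and once domination of $w$ by $\idom(v)$ is in hand, invoking minimality of the immediate dominator is immediate. Throughout I would track the order directions carefully, since $\preceq$ runs from ancestors down to descendants and it is easy to invert an inequality when applying the minimality of $\idom(w)$.
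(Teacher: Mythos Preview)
Your proposal is correct and follows essentially the same argument as the paper: assume $\idom(v) \not\preceq w$, extend an arbitrary path $\RRR{T}\to w$ by a path $w\to v$, use the hypothesis to force $\idom(v)$ onto the $\RRR{T}\to w$ portion, conclude that $\idom(v)$ dominates $w$, and invoke minimality of $\idom(w)$. Your extra care about the concatenation being simple and about verifying $w \prec \idom(v)$ is fine but not strictly needed in a DAG, where directed walks are automatically simple; the paper omits these checks.
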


To use these definitions in an algorithmic context, we will make use of the following result:

\begin{theorem} \emph{\cite{lengauer1979fast}} \label{thm:idom}
Given $T$, there exists an algorithm of time complexity $\mathcal{O}(|E|)$ that finds $\idom(v)$ for each $v \in V$. \qed
\end{theorem}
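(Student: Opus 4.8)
The plan is to establish this by presenting and justifying the Lengauer--Tarjan dominator algorithm, specialised to our rooted DAG $T$ with source $\RRR{T}$. The starting point is a single depth-first search from $\RRR{T}$, which produces a spanning DFS tree and assigns every vertex a preorder number; I will identify vertices with these numbers, so that ``$u < w$'' means $u$ is reached first, and I will write $u \to^* x$ for the (DFS-specific) relation ``$u$ is a tree ancestor of $x$, possibly $u = x$''. Note that ancestors receive smaller preorder numbers, so $u \to^* x$ forces $u \le x$; this little fact will keep the case analysis below clean. The DFS itself is $\mathcal{O}(|E|)$, and the goal is to show that everything afterwards can be charged to edges as well.

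The central auxiliary quantity is the \emph{semidominator} $\recht{sdom}(w)$, the minimum-numbered vertex $u$ admitting a path $u = x_0 \to x_1 \to \cdots \to x_k = w$ with $x_i > w$ for all $0 < i < k$. The first real step is to prove the combinatorial characterisation that computes it without enumerating paths,
\[
\recht{sdom}(w) = \min\Bigl(\{u : (u,w) \in E,\ u < w\} \cup \{\recht{sdom}(u) : (x,w) \in E,\ u \to^* x,\ u > w\}\Bigr),
\]
where the constraint $u > w$ together with $u \to^* x$ silently enforces $x > w$. Since every term on the right refers only to vertices numbered above $w$, the values $\recht{sdom}(w)$ can be filled in by one sweep in decreasing preorder, inspecting each incoming edge of $w$ exactly once.

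Given the semidominators, I would recover the immediate dominators via the Lengauer--Tarjan dichotomy: letting $u$ be a vertex of minimum semidominator on the tree path from $\recht{sdom}(w)$ (exclusive) down to $w$ (inclusive), one has $\idom(w) = \recht{sdom}(w)$ if $\recht{sdom}(u) = \recht{sdom}(w)$, and $\idom(w) = \idom(u)$ otherwise. Justifying this is the genuinely delicate core of the argument, and I expect it to be the main obstacle: it rests on first showing that every dominator of $w$ is a DFS-tree ancestor of $w$ (immediate, since the tree path $\RRR{T} \to w$ is itself a root-to-$w$ path and must contain the dominator), and then showing that the semidominator pins down exactly how far down the tree the immediate dominator can descend. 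Because in the non-trivial case $u$ is a proper ancestor of $w$, hence $u < w$, a second sweep in increasing preorder resolves all the ``$\idom(w) = \idom(u)$'' cases with $\idom(u)$ already final.

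The remaining concern is efficiency: both the semidominator formula and the dichotomy demand, for a vertex $x$, the minimum-semidominator ancestor along a tree path. I would supply this with the standard \emph{link--eval} forest, a union--find structure in which a $\mathrm{link}$ attaches a vertex to its DFS parent and $\mathrm{eval}(x)$ returns that path-minimum, using path compression to collapse visited chains. Each edge triggers at most one $\mathrm{eval}$ and each vertex one $\mathrm{link}$, so with path compression (and union by size) the amortised cost per operation is bounded by the inverse-Ackermann factor, which is treated as constant, yielding the stated $\mathcal{O}(|E|)$ bound. This last amortised analysis is the one ingredient I would invoke rather than reprove, as it is precisely the classical union--find bound.
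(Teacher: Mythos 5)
Your proposal is correct and takes essentially the same route as the paper: the paper offers no independent proof of this theorem but imports it from \cite{lengauer1979fast}, and what you have written is precisely a reconstruction of that Lengauer--Tarjan semidominator argument (DFS preorder, the semidominator recurrence, the dichotomy for $\idom$, and the link--eval forest). The one caveat you already flag yourself --- that path compression yields an inverse-Ackermann (or, in the simple variant, logarithmic) factor that is treated as constant --- is a simplification the paper's own $\mathcal{O}(|E|)$ statement makes as well.
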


\section{Preliminaries II: Squarefree polynomial algebras} \label{sec:sfpa}

In this second preliminary section, we formally define the algebras in which our calculations take place. These are similar to multivariate polynomial algebras, except in every monomial every variable can have degree at most $1$.

\begin{definition}
Let $X$ be a finite set. We define the \emph{squarefree real polynomial algebra over $X$} to be the algebra $\SFPA(X)$ consisting of formal sums
\[
\alpha = \sum_{Y \subseteq X} \alpha_Y \prod_{x \in Y} \FF{x},
\]
where the $\FF{x}$ are formal variables and $\alpha_Y \in \mathbb{R}$. Addition and multiplication are as normal polynomials, except that they are subject to the law $\FF{x}^2 = \FF{x}$ for all $x \in X$; that is,
\begin{align*}
(\alpha+\beta)_Y &= \alpha_Y + \beta_Y,\\
(\alpha \cdot \beta)_Y &= \sum_{\substack{Y',Y'' \subseteq X\colon\\Y'\cup Y'' = Y}} \alpha_{Y'}\beta_{Y''}.
\end{align*}
\end{definition}

\begin{example} \label{ex:poly}
Let $X = \{x,y\}$. Let $\alpha = 2+\FF{x}+\FF{y}$ and $\beta = \FF{x}+3\FF{x}\FF{z}$. Coefficientwise, $\alpha$ is described as
\[
\alpha_X = \begin{cases}
2, & \textrm{ if $X = \varnothing$},\\
1, & \textrm{ if $X = \{x\}$ or $X = \{y\}$},\\
0, & \textrm{ if $X = \{x,y\}$.}
\end{cases}
\]
Furthermore, $\alpha+\beta = 2+2\FF{x}+\FF{y}+3\FF{x}\FF{z}$, and
\begin{align*}
\alpha \cdot \beta &= \alpha \cdot \FF{x} + \alpha \cdot (3\FF{x}\FF{z}) \\
&= (2\FF{x}+\FF{x}+\FF{x}\FF{y}) + (6\FF{x}\FF{z}+3\FF{x}\FF{z}+3\FF{x}\FF{y}\FF{z}) \\
&= 3\FF{x}+\FF{x}\FF{y}+9\FF{x}\FF{z}+3\FF{x}\FF{y}\FF{z}.
\end{align*}
\end{example}

Note that if $X \subseteq X'$, an $\alpha \in \SFPA(X)$ can also be considered an element of $ \SFPA(X')$, by taking $\alpha_{Y} = 0$ whenever $Y \not \subseteq X$. For two sets $X$ and $Y$ this also allows us to add and multiply $\alpha \in \SFPA(X)$ and $\beta \in \SFPA(Y)$, by considering both to be elements of $\SFPA(X \cup Y)$. In the rest of this paper we will do this without comment.

Besides multiplication and addition, another important operation that we need is the substitution of a formal variable by a polynomial. This works the same as with regular polynomials.

\begin{definition} \label{def:sub}
Let $X,Y$ be finite sets and $x \in X \setminus Y$. Let $\alpha \in \SFPA(X)$ and $\beta \in \SFPA(Y)$. Then the \emph{substitution} $\alpha[\FF{x} \mapsto \beta]$ is the element of $\SFPA(X\setminus\{x\} \cup Y)$ obtained by replacing all instances of $\FF{x}$ by $\beta$; more formally, $\alpha[\FF{x} \mapsto \beta]$ is expressed as
\begin{align*}
&\beta \cdot \left(\sum_{\substack{Z \subseteq X\colon\\x \in Z}} \alpha_Z \prod_{x' \in Z\setminus\{x\}} \FF{x'}\right) + \sum_{\substack{Z \subseteq X\colon\\x \notin Z}} \alpha_Z \left(\prod_{x' \in Z} \FF{x'}\right).
\end{align*}
In terms of coefficients this is expressed as
\[
\alpha[\FF{x} \mapsto \beta]_Z = \alpha_Z + \sum_{\substack{x \in Z' \subseteq X,\\ Z'' \subseteq Y\colon\\ Z' \setminus \{x\} \cup Z'' = Z}} \alpha_{Z'}\beta_{Z''}
\]
where $\alpha_Z = 0$ if $Z \not \subseteq X$.
\end{definition}

Note that in this definition the multiplication and addition are of elements of $\SFPA(X \setminus \{x\} \cup Y)$. 

\begin{example}
Continuing Example \ref{ex:poly}, the substitution $\beta[\FF{z}\mapsto \alpha]$ is equal to
\begin{align*}
\beta[\FF{z} \mapsto \alpha] &= \FF{x}+3\FF{x} \cdot (2+\FF{x}+\FF{y}) \\
&= 10\FF{x}+3\FF{x}\FF{y}.
\end{align*}
\end{example}

In what follows, we will need three results on calculation in $\mathcal{A}(X)$. The first result considerably simplifies the substitution operation.

\begin{lemma} \label{lem:sub}
Let $x,\alpha,\beta$ be as in Definition \ref{def:sub}. Then
\[
\alpha[\FF{x} \mapsto \beta] = \alpha[\FF{x} \mapsto 1]\cdot \beta + \alpha[\FF{x} \mapsto 0] \cdot (1-\beta).
\]
\end{lemma}

The second result shows how substitution behaves with respect to addition and multiplication.

\begin{lemma} \label{lem:subarit}
Let $\alpha_1,\alpha_2 \in \SFPA(X)$, $\beta \in \SFPA(Y)$, and $x \in X \setminus Y$. Then:
\begin{enumerate}
\item $(\alpha_1+\alpha_2)[\FF{x} \mapsto \beta] = \alpha_1[\FF{x} \mapsto \beta] + \alpha_2[\FF{x} \mapsto \beta]$.
\item If $\beta^2 = \beta$, then furthermore $(\alpha_1\alpha_2)[\FF{x} \mapsto \beta] = \alpha_1[\FF{x} \mapsto \beta] \cdot \alpha_2[\FF{x} \mapsto \beta]$.
\end{enumerate} 
\end{lemma}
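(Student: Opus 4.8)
The plan is to prove both parts by reducing the substitution operation to the simplified form given in Lemma~\ref{lem:sub}, and then verifying the two identities by direct computation using the ordinary distributivity and associativity of addition and multiplication in $\SFPA(X \setminus \{x\} \cup Y)$. The key observation is that once we write $\alpha[\FF{x} \mapsto \beta] = \alpha[\FF{x} \mapsto 1]\cdot \beta + \alpha[\FF{x} \mapsto 0]\cdot(1-\beta)$, the substitution is expressed entirely in terms of evaluation-at-constants maps $\alpha \mapsto \alpha[\FF{x} \mapsto 1]$ and $\alpha \mapsto \alpha[\FF{x} \mapsto 0]$, which are easier to handle than the full substitution.

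For part (1), the additivity of substitution, I would first establish that the constant-substitution maps are additive: $(\alpha_1 + \alpha_2)[\FF{x} \mapsto \epsilon] = \alpha_1[\FF{x} \mapsto \epsilon] + \alpha_2[\FF{x} \mapsto \epsilon]$ for $\epsilon \in \{0,1\}$. This follows immediately from the coefficientwise formula $(\alpha_1+\alpha_2)_Y = (\alpha_1)_Y + (\alpha_2)_Y$, since setting a variable to a constant is a linear operation on coefficients. Applying Lemma~\ref{lem:sub} and then collecting the $\beta$ and $(1-\beta)$ terms gives the result; this part is essentially routine bookkeeping.

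Part (2) is where the hypothesis $\beta^2 = \beta$ is essential, and this is the main obstacle. Using Lemma~\ref{lem:sub} on both factors, I would expand
\[
\alpha_1[\FF{x} \mapsto \beta]\cdot\alpha_2[\FF{x} \mapsto \beta]
\]
into four terms, each a product of one constant-substitution of $\alpha_1$, one of $\alpha_2$, and one of the factors $\beta$ or $(1-\beta)$. The cross terms carry $\beta(1-\beta) = \beta - \beta^2$, which vanishes precisely because $\beta^2 = \beta$; the two surviving terms carry $\beta^2 = \beta$ and $(1-\beta)^2 = 1 - 2\beta + \beta^2 = 1-\beta$. To finish, I must show that the constant-substitution maps are \emph{multiplicative}, i.e. $(\alpha_1\alpha_2)[\FF{x} \mapsto \epsilon] = \alpha_1[\FF{x}\mapsto\epsilon]\cdot\alpha_2[\FF{x}\mapsto\epsilon]$ for $\epsilon \in \{0,1\}$. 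Here one must be slightly careful: because of the law $\FF{x}^2 = \FF{x}$, evaluating $\FF{x}$ at a constant commutes with multiplication only because $\epsilon^2 = \epsilon$ holds for $\epsilon \in \{0,1\}$ (so the squarefree reduction is compatible with the evaluation), which is exactly why the claim would fail for a general constant. With multiplicativity in hand, the two surviving terms combine via Lemma~\ref{lem:sub} applied to the product $\alpha_1\alpha_2$, yielding $(\alpha_1\alpha_2)[\FF{x} \mapsto \beta]$, as desired.

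The cleanest way to organize the whole argument is therefore to isolate two auxiliary facts as a preliminary step: the constant-substitution maps $-[\FF{x} \mapsto 0]$ and $-[\FF{x} \mapsto 1]$ are ring homomorphisms from $\SFPA(X)$ to $\SFPA(X \setminus \{x\})$. Both parts of the lemma then follow formally from these homomorphism properties together with Lemma~\ref{lem:sub} and the idempotence $\beta^2 = \beta$, avoiding any direct manipulation of the multi-index coefficient sums in Definition~\ref{def:sub}. I expect the verification that $-[\FF{x} \mapsto \epsilon]$ respects multiplication to be the only genuinely nontrivial calculation, and it is where the squarefree reduction rule must be invoked.
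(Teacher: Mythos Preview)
Your proposal is correct and follows essentially the same route as the paper: first establish that the constant-substitution maps $-[\FF{x}\mapsto 0]$ and $-[\FF{x}\mapsto 1]$ are ring homomorphisms (the paper does the multiplicativity check for $\beta=1$ by an explicit coefficient computation using $\alpha[\FF{x}\mapsto 1]_Y=\alpha_Y+\alpha_{Y\cup\{x\}}$, and says $\beta=0$ is analogous), then reduce the general case via Lemma~\ref{lem:sub} and the idempotence $\beta^2=\beta$ exactly as you describe. The only minor difference is that for part~(1) the paper appeals directly to the coefficientwise formula in Definition~\ref{def:sub} rather than routing through Lemma~\ref{lem:sub}, but your version works just as well.
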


The third result states that two substitution operations can be interchanged, as long as one does not substitute a variable present in the other:

\begin{lemma} \label{lem:subexchange}
Let $\alpha \in \SFPA(X)$, $\beta_1 \in \SFPA(Y_1)$, $\beta_2 \in \SFPA(Y_2)$, $x_1,x_2 \in X \setminus (Y_1 \cup Y_2)$. If $x_1 \notin Y_2$ and $x_2 \notin Y_1$, then $\alpha[\FF{x_1}\mapsto \beta_1][\FF{x_2} \mapsto \beta_2] = \alpha[\FF{x_2} \mapsto \beta_2][\FF{x_1} \mapsto \beta_1]$.
\end{lemma}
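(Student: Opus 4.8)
The plan is to exploit that substitution is $\RR$-linear in its first argument --- which is immediate from the coefficient formula in Definition~\ref{def:sub}, and is also reflected in Lemma~\ref{lem:subarit}(1) --- so that both composite maps $\alpha \mapsto \alpha[\FF{x_1}\mapsto\beta_1][\FF{x_2}\mapsto\beta_2]$ and $\alpha \mapsto \alpha[\FF{x_2}\mapsto\beta_2][\FF{x_1}\mapsto\beta_1]$ are $\RR$-linear in $\alpha$. Since the monomials $m_Z = \prod_{x'\in Z}\FF{x'}$ for $Z\subseteq X$ span $\SFPA(X)$, it suffices to verify the claimed identity for $\alpha = m_Z$. (Throughout I take $x_1\neq x_2$, which is forced if the two substitutions are to be composed meaningfully.)

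Next I would split into four cases according to whether $x_1\in Z$ and whether $x_2\in Z$, writing $m_Z$ as a product of the relevant variable factors and a monomial $m_{Z_0}$ over $Z_0 = Z\setminus\{x_1,x_2\}$. In each case both orders of substitution can be computed directly with Lemma~\ref{lem:sub}, which turns substitution into the affine combination $\alpha[\FF{x}\mapsto\beta] = \alpha[\FF{x}\mapsto 1]\cdot\beta + \alpha[\FF{x}\mapsto 0]\cdot(1-\beta)$ and so avoids any appeal to idempotency of $\beta_1,\beta_2$. When neither $x_1$ nor $x_2$ lies in $Z$, both substitutions act trivially and both sides equal $m_Z$. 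When exactly one of them, say $x_1$, lies in $Z$, the first substitution replaces $\FF{x_1}$ by $\beta_1$ and the second acts trivially, and symmetrically in the other order.

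The crux is the case $x_1,x_2\in Z$, where $m_Z = \FF{x_1}\FF{x_2}m_{Z_0}$. Computing the left-hand order gives $\beta_1\FF{x_2}m_{Z_0}$ after the first step, and the key point is that the second substitution $[\FF{x_2}\mapsto\beta_2]$ leaves the factor $\beta_1$ untouched because $x_2\notin Y_1$; likewise $[\FF{x_1}\mapsto\beta_1]$ in the other order leaves $\beta_2$ untouched because $x_1\notin Y_2$. Hence both orders collapse to $\beta_1\beta_2 m_{Z_0}$, and this is exactly where the hypotheses $x_1\notin Y_2$ and $x_2\notin Y_1$ are used --- without them each substitution could reintroduce the variable the other is eliminating, and the identity would fail. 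I expect this interference-freeness to be the only real obstacle; the remaining cases and the linearity reduction are routine bookkeeping.

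As an alternative avoiding the case split, one can expand both orders symmetrically by applying Lemma~\ref{lem:sub} to each variable in turn. This reduces everything to the four doubly-constant-substituted polynomials $\alpha[\FF{x_1}\mapsto c_1][\FF{x_2}\mapsto c_2]$ with $c_1,c_2\in\{0,1\}$, which are symmetric in $(x_1,c_1)$ and $(x_2,c_2)$ since substituting constants is just evaluation (note $c^2=c$ for $c\in\{0,1\}$, so Lemma~\ref{lem:subarit}(2) applies); the factors $\beta_1,\beta_2$ then recombine identically in both orders using $x_2\notin Y_1$ and $x_1\notin Y_2$ exactly as above.
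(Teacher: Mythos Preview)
Your proposal is correct, and in fact you have sketched two valid arguments. Your \emph{alternative} route---expand each side via Lemma~\ref{lem:sub} in both variables, reducing to the four constant-substituted polynomials $\alpha[\FF{x_1}\mapsto c_1][\FF{x_2}\mapsto c_2]$ with $c_1,c_2\in\{0,1\}$, observe these commute (the paper checks this directly from the coefficient rule $\alpha[\FF{x}\mapsto 1]_Y = \alpha_Y+\alpha_{Y\cup\{x\}}$), and then use $x_2\notin Y_1$, $x_1\notin Y_2$ to see that the $\beta_i$-factors are unaffected by the inner constant substitutions---is precisely the paper's proof.

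Your \emph{main} route, reducing by $\RR$-linearity to monomials $m_Z$ and doing a four-way case split on $Z\cap\{x_1,x_2\}$, is genuinely different and somewhat more concrete: one never writes down the full four-term Shannon expansion, and the crux case $x_1,x_2\in Z$ boils down to the single identity $(\beta_1\FF{x_2}m_{Z_0})[\FF{x_2}\mapsto\beta_2]=\beta_1\beta_2 m_{Z_0}$. The one step that deserves a word of justification is exactly this ``leaves the factor $\beta_1$ untouched'' claim: it is \emph{not} an instance of Lemma~\ref{lem:subarit}(2) as stated, since $\beta_2$ need not be idempotent. It does follow cleanly from Lemma~\ref{lem:sub} together with Lemma~\ref{lem:subarit}(2) applied to the constants $0,1$ (or directly from the coefficient formula in Definition~\ref{def:sub}), and your parenthetical about $c^2=c$ shows you have this in mind. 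The paper's approach trades this small lemma for a longer but entirely mechanical expansion; your monomial approach is shorter once that auxiliary fact is in hand.
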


When this lemma applies and the order of substitutions does not matter, we will write expressions like $\alpha[\FF{x_1} \mapsto \beta_1,\FF{x_2} \mapsto \beta_2]$, or even $\alpha[\forall i \leq n\colon \FF{x_i} \mapsto \beta_i]$.

Note that as an $\mathbb{R}$-algebra, one may identify $\mathcal{A}(X)$ with $K/I$, where $K = \mathbb{R}[\FF{x}\colon x \in X]$ is a free polynomial algebra and $I$ is the ideal generated by the set $\{\FF{x}^2-\FF{x} \mid x \in X\}$. However, the substitution operation does not correspond to a `natural' operation on on $K/I$.

\subsection{Real-valued Boolean functions}

We will use the elements of $\mathcal{A}(X)$ is to represent functions $\BB^X \rightarrow \mathbb{R}$. The following result states that this can be done in a unique way. Since both elements of $\mathcal{A}(X)$ and functions $\BB^X \rightarrow \mathbb{R}$ can be represented by $2^{|X|}$ real numbers, this should come as no surprise.

\begin{theorem} \label{thm:realfunc}
Let $X$ be a finite set, and let $g\colon \BB^X \rightarrow \RR$ be any function. Then there exists a unique $\mon{g} \in \SFPA(X)$ such that $
g(c) = \mon{g}[\forall x \in X\colon \FF{x} \mapsto c_x]$
for all $\vec{c} \in \BB^X$.
\end{theorem}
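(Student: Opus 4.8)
The plan is to read the asserted property as saying that a single $\RR$-linear map is a bijection. Define $\Phi\colon \SFPA(X) \to (\BB^X \to \RR)$ by $\Phi(\alpha)(\vec c) = \alpha[\forall x \in X\colon \FF{x}\mapsto c_x]$. Existence and uniqueness of $\mon{g}$ are then precisely surjectivity and injectivity of $\Phi$. First I would verify that $\Phi$ is well defined and $\RR$-linear: each $c_x \in \BB$ is a constant of $\SFPA(\varnothing) = \RR$, so the substitutions commute and the iterated substitution is unambiguous by Lemma \ref{lem:subexchange}, while linearity in $\alpha$ follows from Lemma \ref{lem:subarit}(1) together with the coefficientwise formula of Definition \ref{def:sub}.

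Next I would compute $\Phi$ on monomials. By Lemma \ref{lem:sub}, substituting $\FF{x}\mapsto c_x$ into a monomial $\prod_{x' \in Y}\FF{x'}$ multiplies it by $c_x$ when $x \in Y$ and leaves it unchanged when $x \notin Y$; iterating over all $x$ (the order being irrelevant) shows that the monomial evaluates to $\prod_{x \in Y} c_x$. Writing $\recht{supp}(\vec c) := \{x \in X : c_x = \tone\}$, this product is $1$ if $Y \subseteq \recht{supp}(\vec c)$ and $0$ otherwise, so for $\alpha = \sum_{Y\subseteq X}\alpha_Y\prod_{x\in Y}\FF{x}$ we obtain
\[
\Phi(\alpha)(\vec c) = \sum_{Y \subseteq \recht{supp}(\vec c)} \alpha_Y .
\]
Identifying $\vec c$ with the set $S = \recht{supp}(\vec c) \subseteq X$, the equation $\Phi(\mon g) = g$ becomes the family of relations $g(S) = \sum_{Y \subseteq S}\alpha_Y$ indexed by the subsets $S \subseteq X$.

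This is the subset-sum (zeta) transform on the Boolean lattice, which is triangular with respect to inclusion, and this is where the real content of the proof lies. I would establish injectivity of $\Phi$ by induction on $|S|$: the relation for $S = \varnothing$ forces $\alpha_\varnothing = 0$, and the relation for a general $S$ expresses $\alpha_S$ through $g(S)$ and the $\alpha_Y$ with $Y \subsetneq S$, so $g \equiv 0$ forces every $\alpha_Y = 0$. Since $\SFPA(X)$ and the space of functions $\BB^X \to \RR$ both have $\RR$-dimension $2^{|X|}$, an injective linear map between them is automatically an isomorphism, giving surjectivity — hence existence — for free. Alternatively, one can write the inverse explicitly by Möbius inversion, $\alpha_Y = \sum_{Z \subseteq Y}(-1)^{|Y \setminus Z|}g(Z)$, which proves existence and uniqueness simultaneously without the dimension count.

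The main obstacle is conceptual rather than technical: recognizing that the evaluation map is exactly the invertible zeta transform. Everything else — well-definedness, linearity, and the monomial computation — is routine bookkeeping with the substitution lemmas of Section \ref{sec:sfpa}. The one point that needs a little care is justifying the passage from the formal substitution of Definition \ref{def:sub} to the naive ``plug in the bits $c_x$'' evaluation, which is handled cleanly by Lemma \ref{lem:sub} as above.
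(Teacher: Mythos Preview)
Your proposal is correct and mirrors the paper's own proof: both show that the evaluation map $\alpha \mapsto (\vec c \mapsto \alpha[\forall x\colon \FF{x}\mapsto c_x])$ is an $\RR$-linear bijection by computing $\Phi(\alpha)(\vec c) = \sum_{Y \subseteq \recht{supp}(\vec c)}\alpha_Y$ and invoking triangularity (the paper phrases this as a lower-triangular matrix with unit diagonal rather than as the zeta transform, but it is the same argument). Your added justification of well-definedness via Lemmas~\ref{lem:subexchange} and~\ref{lem:subarit}, and the optional explicit M\"obius inverse, are harmless extras the paper omits.
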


\begin{example}
Let $X = \{x,y\}$, with $\vec{c} \in \BB^X$ represented as $c_xc_y$. Consider the function $g\colon \BB^X \rightarrow \RR$ given by
\begin{align*}
g(00) &= 3,& g(01) &= -2, \\
g(10) &= 7,& g(11) &= 4.
\end{align*}
Suppose $\mon{g} = k_1+k_2\FF{x}+k_3\FF{y}+k_4\FF{x}\FF{y}$. Then, for instance,
\[
g(10) = \mon{g}[\FF{x} \mapsto 1,\FF{y} \mapsto 0] = k_1+k_2.
\]
In a similar way we can express all $g(\vec{c})$ as sums of $k_i$. Thus, to find the $k_i$, we have to solve
\[
\left(\begin{array}{cccc}1&0&0&0\\1&1&0&0\\1&0&1&0\\1&1&1&1\end{array}\right)\left(\begin{array}{c}k_1\\k_2\\k_3\\k_4\end{array}\right) = \left(\begin{array}{c}3\\7\\-2\\4\end{array}\right).
\]
Since this matrix is lower triangular with nonzero diagonal entries, it is invertible, so the $k_i$ exist and are unique. In fact, we find $\mon{g} = 3+4\FF{x}-5\FF{y}+2\FF{x}\FF{y}$.
\end{example}

\section{The algorithm}

\begin{algorithm}[t]
\caption{The algorithm $\SQFU(T)$.} \label{alg:SQFU}
\SetKwInOut{Input}{input}\SetKwInOut{Output}{output}
\Input{A FT $T = (V,E,\gamma,\pp)$}
\Output{$U(T)$}
$\mathsf{ToDo} \leftarrow V$\;
\While{$\mathsf{ToDo} \neq \varnothing$}{
 Pick $v \in \mathsf{ToDo}$ minimal w.r.t. $\preceq$\;
 $\mathsf{ToDo} \leftarrow \mathsf{ToDo} \setminus \{v\}$\;
 \uIf{$\gamma(v) = \tBE$}{
  $g_v \leftarrow \pp(v)$\;
 }
 \Else{
  \uIf{$\gamma(v) = \tOR$}{
   $g_v \leftarrow 1-\prod_{w \in \ch(v)} (1-\FF{w})$\;
  }
  \Else{
   $g_v \leftarrow \prod_{w \in \ch(v)} \FF{w}$\;
  }
  $\mathsf{ToDo}_v \leftarrow \{w \in V \mid \idom(w) = v\}$\;
  \While{$\mathsf{ToDo}_v \neq \varnothing$}{
   Pick $w \in \mathsf{ToDo}_v$ maximal w.r.t. $\preceq$\;
   $\mathsf{ToDo}_v \leftarrow \mathsf{ToDo}_v \setminus \{w\}$\;
   $g_v \leftarrow g_v[\FF{w} \mapsto g_w]$\;
  }
 }
}
\Return{$g_{\RRR{T}}$}
\end{algorithm}

Using the notation of the previous two sections, we can now state our algorithm for calculating unreliability. It is presented in Algorithm \ref{alg:SQFU}. The algorithm works bottom-up, assigning to each node $v$ a formal expression $g_v \in \SFPA(\{w \in V \mid w \prec v\})$ representing the failure probability of $v$; the formal variables $\FF{w}$ present in $g_v$ represent nodes with multiple paths from the root, which we will also encounter further in the calculation.

The algorithm works as follows: working bottom-up (lines 1--4), the algorithm first assigns a $g_v$ of the most basic form to $v$ (lines 5--12): for a BE this is simply its failure probability $\pp(v)$, while for OR- and AND-gates it is the expression for the failure probability in terms of the formal variables $\FF{w}$, where $w$ ranges over $\ch(v)$. After obtaining this expression of $g_v$, the algorithm then substitutes away all formal variables that we will not encounter later in the computation (lines 13--18). These are precisely the $\FF{w}$ for which $\idom(w) = v$, as for these $w$ this is the point where we will not encounter other copies of $\FF{w}$ anymore. We replace each $\FF{w}$ with the associated expression $g_w$; we start with the $w$ closest to $v$, as these $g_w$ may contain other $\FF{w'}$ that also need to be substituted away. Finally, we return $g_{\RRR{T}}$ (line 21). At this point all formal variables have been substituted away, so $g_{\RRR{T}} \in \mathbb{R}$. Note that `under the hood' we have determined $\idom(v)$ for each $v$, which can be done in linear time by Theorem \ref{thm:idom}.

The main theoretical result of this paper is the validity of Alg.~\ref{alg:SQFU}.

\begin{theorem}\label{thm:main}
Let $T$ be a FT. Then $\SQFU(T) = U(T)$.
\end{theorem}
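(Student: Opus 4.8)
The plan is to prove, by induction bottom-up with respect to $\preceq$ (the order in which Alg.~\ref{alg:SQFU} processes nodes), a single invariant about the value $g_v$ held after the substitution loop (lines 13--18) has finished for $v$. Write $L_v = \{w \in V : w \prec v \prec \idom(w)\}$ for the set of nodes that are descendants of $v$ yet also reachable from $\RRR{T}$ without passing through $v$; these are exactly the variables I expect to survive in $g_v$. The invariant has two parts. \emph{(Well-formedness)} $g_v \in \SFPA(L_v)$. \emph{(Correctness)} $g_v = \mon{h_v}$ in the sense of Theorem \ref{thm:realfunc}, where $h_v\colon \BB^{L_v} \to \RR$ sends $\vec c$ to the failure probability of $v$ in the \emph{clamped} subtree obtained by restricting $T$ to $\{u : u \preceq v\}$, rooting at $v$, and turning each live node $w \in L_v$ into a basic event whose value is fixed to $c_w$ (the original basic events keeping their probabilities $\pp$ and remaining independent). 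Once this is established the theorem is immediate: at the root $L_{\RRR{T}} = \varnothing$ (no dominator lies strictly above the root, which is the maximum of $\preceq$), so $g_{\RRR{T}} \in \SFPA(\varnothing) = \RR$ and the clamped tree is $T$ itself, whence $g_{\RRR{T}} = h_{\RRR{T}} = \prob(\struc{T}(\RRR{T}, \vec F) = \tone) = U(T)$.

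For well-formedness I would first argue the algorithm is well-defined and that the surviving variable set is correct. The base form of lines 5--12 lies in $\SFPA(\ch(v))$, and each substitution $\FF{w} \mapsto g_w$ with $\idom(w) = v$ introduces, by the inductive hypothesis, only variables from $L_w$. The key point is that every newly introduced $u \in L_w$ satisfies $\idom(u) \succeq v$, and this is exactly where Lemma \ref{lem:idom} enters: from $u \prec w$ and $\idom(u) \succ w$ the lemma forces $\idom(w) \preceq \idom(u)$, i.e.\ $v \preceq \idom(u)$. Hence every variable that ever appears either already lies in $L_v$ or has $\idom(u) = v$ and is therefore still scheduled for elimination; processing $\mathsf{ToDo}_v$ in decreasing $\preceq$-order (maximal first) guarantees that such a $u$, being below the $w$ that introduced it, is handled later in the same loop. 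So after line 18 exactly the variables of $L_v$ remain.

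Correctness is the heart of the argument, and the substitution step is where I expect the real work to lie. The base cases are immediate: for a basic event $h_v = \pp(v)$, and for a gate the base form of line 9 or line 11 is, by the uniqueness in Theorem \ref{thm:realfunc}, precisely $\mon{h_v}$ for the Boolean gate function $\vec c \mapsto 1-\prod_w(1-c_w)$ (OR) or $\vec c \mapsto \prod_w c_w$ (AND) on the leaf-children. I would then show that a single substitution $\FF{w} \mapsto g_w$ realises the law of total probability on the event ``$w$ fails'': by Lemma \ref{lem:sub} it equals $g_v[\FF{w} \mapsto 1]\,g_w + g_v[\FF{w} \mapsto 0]\,(1-g_w)$, which is the correct update provided the randomness internal to $w$'s subtree is independent of the remaining live variables of the current expression. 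Proving this independence is the main obstacle: I must show that any basic event shared between $w$'s subtree and another branch is itself separated from $v$ by a live node, so that the sharing is carried entirely by a surviving formal variable rather than by already-substituted private randomness. This is again a dominator statement, provable from Lemma \ref{lem:idom}, and it is precisely what makes the squarefree law $\FF{u}^2 = \FF{u}$ do the right bookkeeping: a shared live variable $\FF{u}$ then occurs both in $g_w$ and in the cofactors $g_v[\FF{w} \mapsto \cdot]$, and idempotency collapses the duplicate to the single underlying event, exactly as $\prob(U \wedge U) = \prob(U)$ demands.

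Finally I would assemble the single-substitution steps into the full loop. Lemma \ref{lem:subarit} lets me commute substitution past the sum/product structure of the base form, and Lemma \ref{lem:subexchange} justifies reordering substitutions of distinct nodes, so that the clamped-subtree interpretation is preserved after each step and, after the whole loop, $g_v = \mon{h_v}$ with the clamping now taking place exactly at $L_v$. Induction then carries the invariant to the root and closes the proof. The delicate points to get exactly right are the precise independence statement in the substitution step and the verification that ``clamping at $L_w$ for each substituted $w$'' composes into ``clamping at $L_v$'', both of which I expect to reduce to careful bookkeeping with dominators.
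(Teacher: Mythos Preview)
Your proposal is correct and follows essentially the same approach as the paper. Your set $L_v$ is the paper's $\mathcal{I}_v$, your ``clamped subtree'' is the paper's PCFT $T_v[\mathcal{I}_v]$, and your invariant $g_v=\mon{h_v}$ is exactly Theorem~\ref{thm:mainext}. The only packaging difference is that the paper isolates your ``single substitution realises total probability plus independence'' step as a standalone result (Theorem~\ref{lem:mod}) about \emph{quasimodular composition} of PCFTs, proving once and for all that $\mon{U(T[v\mapsto T'])}=\mon{U(T)}[\FF{v}\mapsto\mon{U(T')}]$ under the hypothesis $\BE{T}\cap\BE{T'}=\varnothing$; the inductive dominator argument then just verifies this disjointness hypothesis and the identity $(\mathcal{I}_v\cup\mathcal{S}_v)\cap\recht{desc}(w_i)=\mathcal{I}_{w_i}$, which is your ``clamping composes'' claim.
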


We will prove this theorem in Section \ref{sec:proof}. First, we introduce a slight extension to the FT formalism.

\section{Partially controllable fault trees}

In this section, we slightly extend the FT formalism in a manner necessary for the proof of Theorem \ref{thm:main}. The resulting objects, \emph{partially controllable fault trees} (PCFTs), are just like regular FTs, except that certain BEs are labelled \emph{controllable BEs}; these do not have a fixed failure probability but instead can be set to $\mathtt{0}$ or $\mathtt{1}$ at will. We emphasize that the concept of PCFTs does not correspond to an engineering reality, but is a mathematical construct needed for the proof of Theorem \ref{thm:main}.

\begin{definition}
An \emph{partially controllable fault tree} (PCFT) is a tuple $T = (V,E,\gamma,\pp)$ where:
\begin{itemize}
    \item $(V,E)$ is a rooted directed acyclic graph;
    \item $\gamma$ is a function $\gamma\colon V \rightarrow \{\tOR,\tAND,\tBE,\tCBE\}$ such that $\gamma(v) \in \{\tBE,\tCBE\}$ if and only if $v$ is a leaf;
    \item $\pp$ is a function $\pp\colon \BE{T} \rightarrow [0,1]$, where $\BE{T} = \{v \in V \mid \gamma(v) = \tBE\}$.
\end{itemize}
\end{definition}

Similar to $\BE{T}$ we define $\CBE{T} = \{v \in V \mid \gamma(v) = \tCBE\}$. Since the failure of CBEs is not probabilistic, one can only speak of the failure probability of $T$ once one has set the states of the CBEs. Therefore, $U(T)$ is not a fixed probability, but a function $\BB^{\CBE{T}} \rightarrow [0,1]$.

\begin{definition} \label{def:cbestruc}
\begin{enumerate}
\item The structure function of $T$ is a map $V \times \BB^{\BE{T}} \times \BB^{\CBE{T}} \rightarrow \BB$ defined by 
\[
\struc{T}(v,\vec{f},\vec{c}) = \begin{cases}
    f_v, & \textrm{ if $\gamma(v) = \tBE$,}\\
    c_v, & \textrm{ if $\gamma(v) = \tCBE$,}\\
    \bigvee_{w \in \ch(v)} \struc{T}(w,\vec{f},\vec{c}), & \textrm{ if $\gamma(v) = \tOR$},\\
    \bigwedge_{w \in \ch(v)} \struc{T}(w,\vec{f},\vec{c}), & \textrm{ if $\gamma(v) = \tAND$}.
\end{cases}
\]
\item Let $\vec{F}\in \BB^{\BE{T}}$ be a random variable so that $F_v$ is Bernoulli distributed with $\prob(F_v = \tone) = \pp(v)$ for each $v \in \BE{T}$, and all $F_v$ are independent. Then the \emph{unreliability} of $T$ is the function $U(T)\colon \BB^{\CBE{T}} \rightarrow [0,1]$ given by
\[
U(T)(\vec{c}) = \mathbb{P}(\struc{T}(\RRR{T},\vec{F},\vec{c}) = \tone).
\]
\end{enumerate}
\end{definition}

In light of Theorem \ref{thm:realfunc}, the function $U(T)\colon \BB^{\CBE{T}} \rightarrow \BB$ is described by its associated polynomial
\[
\langle U(T) \rangle \in \SFPA(\CBE{T}).
\]

\begin{figure}[b]
\centering
\includegraphics[width=2cm]{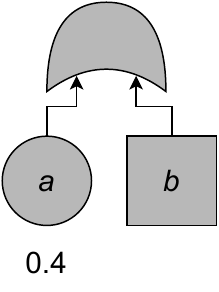}
\caption{The PCFT of Example \ref{ex:pcft}.} \label{fig:pcft}
\end{figure}

\begin{example} \label{ex:pcft}
Consider the PCFT $\tOR(a,b)$, where $\gamma(a) = \tBE$ and $\gamma(b) = \tCBE$ (see Fig.~\ref{fig:pcft}) and $\pp(a) = 0.4$. Then $\BB^{\tBE{T}} \cong \BB^{\tCBE{t}} \cong \BB$, and $\struc{T}(\RRR{T},f,c) = \tone$ if and only if at least one of $f_a,c_b$ equals $\tone$. Since $\prob(F_a = 1) = \pp(a) = 0.4$, it follows that
\[
U(T)(c) = \begin{cases}
0.4,& \textrm{ if $c_b = \tzero$},\\
1,& \textrm{ if $c_b = \tone$}.
\end{cases}
\]
As a polynomial this is $\langle U(T) \rangle = 0.4+0.6\FF{b}$.
\end{example}

\subsection{Quasimodular composition}

\begin{figure} \label{fig:mod}
\centering

\begin{subfigure}{0.2\textwidth}
\includegraphics[width=2.5cm]{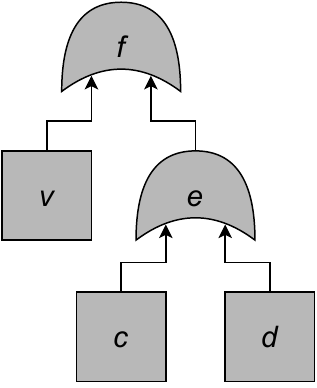}
\caption{$T$}
\end{subfigure}
\begin{subfigure}{0.25\textwidth}
\includegraphics[width=3cm]{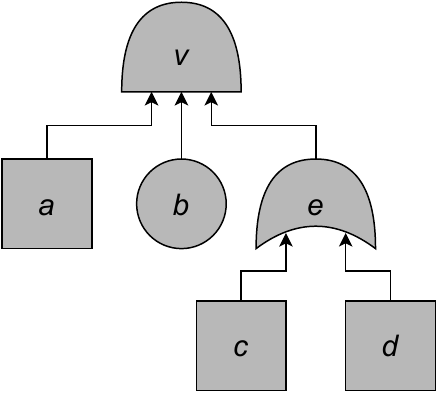}
\caption{$T'$}
\end{subfigure}
\begin{subfigure}{0.25\textwidth}
\includegraphics[width=3cm]{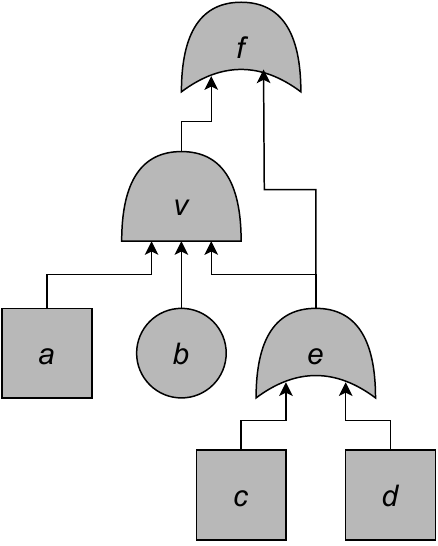}
\caption{$T[v \mapsto T']$}
\end{subfigure}
\caption{An example of quasimodular composition. Square nodes are CBEs.} \label{fig:mod}
\end{figure}

Now that we have expressed a PCFT $T$ as a polynomial $\langle U(T) \rangle$, the next step is to relate substitution operations on such polynomials to graph-theoretic operations on PCFTs. The key concept on the PCFT side is \emph{quasimodular composition}, which is defined as follows.

\begin{definition} \label{def:mod}
Let $T = (V,E,\gamma,\pp)$ and $T' = (V',E',\gamma',\pp')$ where $V$ and $V'$ are not necessarily disjoint, such that $E,\gamma,\pp,\ch$ coincide with $E',\gamma',\pp',\ch$ on $V \cap V'$. Let $v \in \CBE{T} \setminus V'$, and assume that $\BE{T} \cap \BE{T'} = \varnothing$ (see Fig.~\ref{fig:mod}). Then the \emph{quasimodular composition} $T[v \mapsto T']$ of $T$ and $T'$ in $v$ is the PCFT obtained by replacing $v$ in $T$ by the entire PCFT $T'$, rerouting all edges originally to $v$ to $\RRR{T'}$ instead.
\end{definition}

The concept of quasimodular composition of PCFTs is closely related to modular composition of FTs \cite{rauzy1997exact}. The difference is that in modular composition $T$ and $T'$ may not share any nodes, while in quasimodular composition they may share CBEs, as well as any internal nodes; however, due to the condition that these internal nodes must have the same children in $T$ and $T'$, any shared internal nodes may not have any BE descendants.

The following result states that substitution on the polynomial level precisely corresponds to quasimodular composition on the PCFT level; it is the key ingredient to the proof of Theorem \ref{thm:main}.

\begin{theorem} \label{lem:mod}
Let $T,T',v$ be as in Definition \ref{def:mod} and let $T'' = T[v \mapsto T']$ be their quasimodular composition. Then $\CBE{T''} = \CBE{T}\setminus\{v\} \cup \CBE{T'}$, and as elements of $\SFPA(\CBE{T''})$ one has
\[
\mon{U(T'')} = \mon{U(T)}[\FF{v} \mapsto \mon{U(T')}].
\]
\end{theorem}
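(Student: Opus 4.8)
The plan is to treat the two claims separately. The identity $\CBE{T''} = (\CBE{T}\setminus\{v\}) \cup \CBE{T'}$ is pure bookkeeping: the leaves of $T''$ are the leaves of $T$ other than $v$ together with the leaves of $T'$, and since the shared nodes of $V\cap V'$ are CBEs or internal nodes (the hypothesis $\BE{T}\cap\BE{T'} = \varnothing$ rules out shared BEs) they keep their labels under composition. I would record at the same time that $\BE{T''} = \BE{T}\sqcup\BE{T'}$, so that a safety event of $T''$ splits as a pair.

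For the polynomial identity, I would invoke the uniqueness in Theorem \ref{thm:realfunc}: since both $\mon{U(T'')}$ and $\mon{U(T)}[\FF{v}\mapsto\mon{U(T')}]$ lie in $\SFPA(\CBE{T''})$, it is enough to check that they evaluate to the same real number at every $\vec{c}\in\BB^{\CBE{T''}}$. Fixing such a $\vec{c}$ and writing $\vec{c}^{(1)} = \vec{c}|_{\CBE{T}\setminus\{v\}}$, $\vec{c}^{(2)} = \vec{c}|_{\CBE{T'}}$, the first main step is a structural lemma, proved by induction along $\preceq$, describing how $\struc{T''}$ factors. Writing a safety event of $T''$ as $(\vec{f},\vec{f}')\in\BB^{\BE{T}}\times\BB^{\BE{T'}}$, I would show (i) $\struc{T''}(w,(\vec{f},\vec{f}'),\vec{c}) = \struc{T'}(w,\vec{f}',\vec{c}^{(2)})$ for $w\in V'$, and (ii) $\struc{T''}(w,(\vec{f},\vec{f}'),\vec{c}) = \struc{T}(w,\vec{f},\vec{c}')$ for $w\in V\setminus\{v\}$, where $\vec{c}'$ agrees with $\vec{c}$ off $v$ and sets $c'_v = \struc{T'}(\RRR{T'},\vec{f}',\vec{c}^{(2)})$. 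The induction is driven by the fact that the children of a node in $T''$ are its children in $T$ (resp.\ $T'$) with each occurrence of $v$ rerouted to $\RRR{T'}$, together with the identity $\struc{T}(v,\vec{f},\vec{c}') = c'_v = \struc{T'}(\RRR{T'},\vec{f}',\vec{c}^{(2)})$ built into the definition of $\vec{c}'$.

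The second main step is probabilistic. Evaluating (ii) at the root (using $\RRR{T''} = \RRR{T}$, the degenerate case $v = \RRR{T}$ being immediate since then $T'' = T'$ and $\mon{U(T)} = \FF{v}$) and conditioning on $\vec{F}'$, which is independent of $\vec{F}$, I would obtain
\begin{align*}
U(T'')(\vec{c}) = p'\cdot U(T)(\vec{c}^{(1)},v\mapsto\tone) + (1-p')\cdot U(T)(\vec{c}^{(1)},v\mapsto\tzero),
\end{align*}
where $p' = U(T')(\vec{c}^{(2)})$ is the probability that the $T'$-part outputs $\tone$ and $U(T)(\vec{c}^{(1)},v\mapsto b)$ denotes $U(T)$ at the CBE assignment equal to $\vec{c}^{(1)}$ off $v$ and $b$ at $v$. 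To match this with the right-hand side, I would observe that evaluation at $\vec{c}$ is an algebra homomorphism $\SFPA(\CBE{T''})\to\RR$ (a consequence of Lemma \ref{lem:subarit}, since $c_x^2 = c_x$ for Boolean $c_x$), apply it to the expansion of $\mon{U(T)}[\FF{v}\mapsto\mon{U(T')}]$ furnished by Lemma \ref{lem:sub}, and use the defining property of $\mon{U(T)}$ and $\mon{U(T')}$; this reproduces exactly the displayed expression. Uniqueness then gives the theorem.

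I expect the structural induction of (i)--(ii) to be the main obstacle: one must verify that rerouting the edges into $v$ genuinely makes the $T'$-part of $T''$ behave as a self-contained PCFT delivering a single Boolean value to the slot vacated by $v$, and — most delicately — that the shared nodes of $V\cap V'$, which the hypotheses force to carry no BE descendants, are treated consistently by both (i) and (ii). Once this correspondence is pinned down, the probabilistic conditioning and the algebraic matching via Lemmas \ref{lem:sub} and \ref{lem:subarit} are routine.
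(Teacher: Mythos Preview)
Your proposal is correct and follows essentially the same route as the paper: both arguments reduce to checking equality at every Boolean assignment (via the uniqueness in Theorem~\ref{thm:realfunc}), expand the substitution using Lemma~\ref{lem:sub}, and match it against a probabilistic decomposition of $U(T'')(\vec{c})$ obtained by conditioning on the $T'$-part, using that $\BE{T''}=\BE{T}\sqcup\BE{T'}$ makes the two halves independent. The only cosmetic difference is that you spell out an explicit structural induction (your claims (i)--(ii)) for the factorisation of $\struc{T''}$, whereas the paper compresses this into the observation that all paths from the root to $\BE{T'}$ pass through $\tilde v$ and conditions directly on $\struc{T''}(\tilde v,\vec F'',\vec c'')$; your more careful treatment of the shared nodes in $V\cap V'$ is arguably cleaner.
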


This theorem is best read `in reverse': given a large PCFT $T''$, one can calculate $U(T'')$ by finding a \emph{quasimodule} $T'$ and its remainder $T$, and combining $U(T)$ and $U(T')$. In this sense, this theorem is analogous to \emph{modular decomposition} of FTs \cite{rauzy1997exact}, in which a FT's unreliability is expressed in terms of that of its \emph{modules}. Again, the key difference is that we allow not just modular decomposition, but also quasimodular decomposition.

\begin{figure}
\centering
\begin{subfigure}{0.14\textwidth}
\includegraphics[width=2cm]{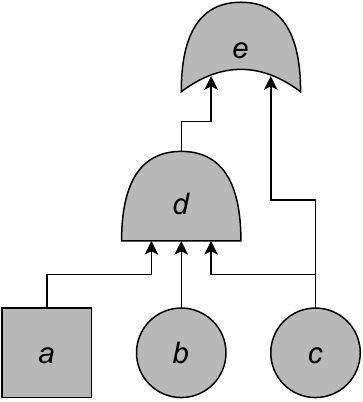}
\caption{$T$}
\end{subfigure} \quad 
\begin{subfigure}{0.14\textwidth}
\includegraphics[width=2cm]{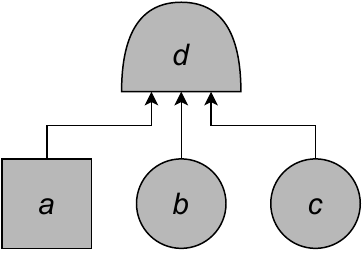}
\caption{$T_d$}
\end{subfigure} \quad 
\begin{subfigure}{0.14\textwidth}
\includegraphics[width=1.2cm]{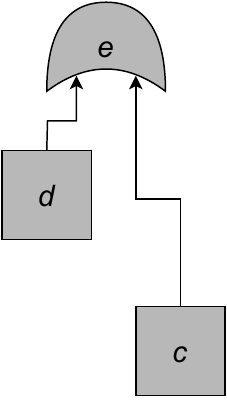}
\caption{$T[\{c,d\}]$}
\end{subfigure}
\caption{An example of the constructions of Definition \ref{def:subtree}. A PCFT $T$ is depicted in (a) (square nodes are CBEs). The sub-FT $T_d$ with root $d$ is depicted in (b). The FT $T[\{c,d\}]$ obtained by turning $c,d$ into CBEs is depicted in (c); note that this FT is also equal to $T[\{b,c,d\}]$, $T[\{a,c,d\}]$ and $T[\{a,b,c,d\}]$.} \label{fig:cons}
\end{figure}

\section{Sketched proof of correctness} \label{sec:proof}

In this section, we sketch the proof of Theorem \ref{thm:main}; a full proof is presented in the appendix. Before we outline the proof, we first define two ways to construct new (PC)FTs from a FT.

\begin{definition} \label{def:subtree}
Let $T = (V,E,\gamma,\pp)$ be a FT.
\begin{enumerate}
    \item Let $v \in V$. Then $T_v = (V_v,E_v,\gamma_v,\pp_v)$ is the FT consisting of the descendants of $v$, with $v$ as a root.
    \item Let $I \subseteq V$. Then $T[I]$ is the PCFT obtained from $T$ via the following procedure:
\begin{itemize}
\item For each $v \in I$, set $\gamma(v) = \tCBE$;
\item For each $v \in I$, remove all outgoing edges;
\item Then $T[I]$ is the PCFT consisting of all nodes reachable from the root.
\end{itemize}
\end{enumerate}
\end{definition}

These constructions are depicted in Figure \ref{fig:cons}.

For a node $v$, let $g_{v,\infty}$ be the value of $g_v$ at the end of the loop in lines 14--18 of Algorithm \ref{alg:SQFU}; this is the value $g_{v}$ has when the algorithm ends, and which will be used to substitute $\FF{v}$ in line 17. Then Theorem \ref{thm:main} follows from the following result:

\begin{theorem} \label{thm:mainext}
Let $v \in V$, and define 
\[
\mathcal{I}_v = \{w \in V \mid w \prec v \prec \idom(w)\}.
\]
Then $g_{v,\infty} = \mon{U(T_v[\mathcal{I}_v])}$.
\end{theorem}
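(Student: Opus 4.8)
The plan is to prove Theorem \ref{thm:mainext} by strong induction on $v$ with respect to $\preceq$, following the bottom-up order in which Algorithm \ref{alg:SQFU} processes the nodes; this is legitimate since all strict descendants of $v$ are handled before $v$, so by the time $v$ is processed every $g_{w,\infty}$ with $w \prec v$ has attained its final value. (Theorem \ref{thm:main} is then the special case $v = \RRR{T}$: here $\mathcal{I}_{\RRR{T}} = \varnothing$, so $T_{\RRR{T}}[\varnothing] = T$ and $\mon{U(T)}$ is the constant $U(T)$.) For the base case, if $\gamma(v) = \tBE$ then $\mathcal{I}_v = \varnothing$, the algorithm sets $g_{v,\infty} = \pp(v)$, and $T_v[\varnothing]$ is the single BE $v$, whose unreliability polynomial is the constant $\pp(v)$. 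For the inductive step, suppose $\gamma(v) \in \{\tOR,\tAND\}$. First I would identify the initial value $g_v^{(0)}$ assigned in lines 9 or 11 with $\mon{U(G_v)}$, where $G_v = T_v[\ch(v)]$ is the single gate $v$ with all its children turned into CBEs. Since $G_v$ has no basic events, $U(G_v)(\vec c)$ is $\{0,1\}$-valued and equals $\bigvee_{w \in \ch(v)} c_w$ (OR-gate) resp.\ $\bigwedge_{w \in \ch(v)} c_w$ (AND-gate); by Theorem \ref{thm:realfunc} its polynomial is $1 - \prod_{w \in \ch(v)}(1 - \FF{w})$ resp.\ $\prod_{w \in \ch(v)} \FF{w}$, which matches $g_v^{(0)}$.

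Next I would track the substitution loop (lines 14--18) as a sequence of quasimodular compositions. Writing $D_v = \{w \mid \idom(w) = v\}$ for the set of nodes substituted at $v$, and letting $w_1, \dots, w_m$ be the order in which they are processed (maximal w.r.t.\ $\preceq$ first), I define PCFTs $P_0 = G_v$ and $P_j = P_{j-1}[w_j \mapsto T_{w_j}[\mathcal{I}_{w_j}]]$. The maximal-first order guarantees that $w_j$ is \emph{exposed} as a CBE of $P_{j-1}$ at the moment it is processed: every $w \in D_v$ lying strictly above $w_j$ has already been substituted, and $\FF{w_j}$ entered $g_v$ either as an original child or through one of those earlier substitutions. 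Using the inductive hypothesis $g_{w_j,\infty} = \mon{U(T_{w_j}[\mathcal{I}_{w_j}])}$ together with Theorem \ref{lem:mod}, each step of the loop replaces the current polynomial by the one representing $U(P_j)$; hence after the loop $g_{v,\infty} = \mon{U(P_m)}$. I would maintain as an invariant that $P_j = T_v[J_j]$ for an explicit CBE-set $J_j$, with $J_0 = \ch(v)$ and $J_j = (J_{j-1} \setminus \{w_j\}) \cup \mathcal{I}_{w_j}$, using that cutting the already-cut tree $T_v[J_{j-1}]$ further at $w_j$ and grafting in $T_{w_j}[\mathcal{I}_{w_j}]$ reproduces $T_v[J_j]$.

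It then remains to establish the set-theoretic identity $J_m = \mathcal{I}_v$. The key tool is a dominator dichotomy: applying Lemma \ref{lem:idom} to $u \prec v$ shows that every strict descendant $u$ of $v$ satisfies exactly one of $\idom(u) \preceq v$ or $\idom(u) \succ v$, the latter being precisely the condition $u \in \mathcal{I}_v$. For the inclusion $J_m \subseteq \mathcal{I}_v$, any leftover CBE has $\idom \neq v$ (else it would have been substituted) and, being a descendant of $v$, the dichotomy forces $\idom \succ v$, i.e.\ membership in $\mathcal{I}_v$. For $J_m \supseteq \mathcal{I}_v$, I would show that the recursion $J_j = (J_{j-1}\setminus\{w_j\}) \cup \mathcal{I}_{w_j}$ never discards an $\mathcal{I}_v$-element and eventually exposes all of them: whenever a node $w' \in \mathcal{I}_{w_j}$ is introduced, Lemma \ref{lem:idom} applied to $w' \prec w_j$, together with $\idom(w_j) = v$ and $\idom(w') \succ w_j$, yields $v \preceq \idom(w')$, so $w'$ is either itself substituted ($\idom(w') = v$) or kept as an element of $\mathcal{I}_v$. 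Combining, $g_{v,\infty} = \mon{U(P_m)} = \mon{U(T_v[\mathcal{I}_v])}$, as desired.

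The main obstacle I anticipate is the rigorous verification that each quasimodular composition $P_{j-1}[w_j \mapsto T_{w_j}[\mathcal{I}_{w_j}]]$ actually satisfies the hypotheses of Definition \ref{def:mod}: that $\BE{P_{j-1}}$ and $\BE{T_{w_j}[\mathcal{I}_{w_j}]}$ are disjoint, that any internal node shared between the remainder and the grafted module carries identical children, and that the grafting genuinely yields $T_v[J_j]$ rather than a tree with extra or missing shared nodes. Both points hinge on reachability arguments about which descendants of $v$ survive the successive cuts, and on the fact, again supplied by Lemma \ref{lem:idom}, that a descendant reachable from $v$ along two routes is cut off at the appropriate $\mathcal{I}$-boundary before any BE below it could be double-counted. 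These are exactly the bookkeeping details I would relegate to the appendix, the conceptual engine being Theorem \ref{lem:mod}, which converts the squarefree substitution performed by the algorithm into the graph operation of replacing a CBE by a module.
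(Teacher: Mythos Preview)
Your proposal is correct and follows essentially the same approach as the paper: induction on $v$, identifying the initial $g_v$ with the unreliability polynomial of the single-gate PCFT, and then using Theorem~\ref{lem:mod} together with the induction hypothesis to interpret each substitution step as a quasimodular composition, with Lemma~\ref{lem:idom} supplying the needed dominator facts. The only cosmetic difference is that the paper tracks the intermediate CBE-set directly as $\mathcal{I}_v \cup \{w_{i+1},\ldots,w_n\}$, so the target $\mathcal{I}_v$ is built into the invariant and no separate ``$J_m = \mathcal{I}_v$'' argument is needed; your iterative definition of $J_j$ leads to the same place via the same lemma.
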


Theorem \ref{thm:main} is just the special case $v = \RRR{T}$, as $T_{\RRR{T}} = T$ and $\mathcal{I}_{\RRR{T}} = \varnothing$. The proof can be sketched as follows:

This is proven by induction on $T$. For BEs this is immediate. If $\gamma(v) = \tAND$, then $g_v$ is initialized as $\prod_{w \in \ch(v)} \FF{w}$. Then, for each $w$ picked in line 15 of Algorithm \ref{alg:SQFU}, a $\FF{w}$ is substituted by its corresponding polynomial $g_w = g_{w,\infty}$. By the induction hypothesis, $g_{w,\infty}$ corresponds to the unreliability function of a certain PCFT, and by Theorem \ref{lem:mod}, this substitution operation corresponds to the composition of PCFTs. By keeping track of the form of the resulting PCFT, one shows that the PCFT one ends up with is exactly $T_v[\mathcal{I}_v]$, showing the result for $v$. The case $\gamma(v) = \tOR$ is, of course, completely analogous. By induction, this proves Theorem \ref{thm:mainext}, and by consequence Theorem \ref{thm:main}.

\section{Complexity} \label{sec:comp}

\begin{algorithm}[t]
\caption{The algorithm $\SQFUU(T)$.} \label{alg:SQFUU}
\SetKwInOut{Input}{input}\SetKwInOut{Output}{output}
\Input{A FT $T = (V,E,\gamma,\pp)$}
\Output{$U(T)$}
$\mathsf{ToDo} \leftarrow V$\;
\While{$\mathsf{ToDo} \neq \varnothing$}{
 Pick $v \in \mathsf{ToDo}$ minimal w.r.t. $\preceq$\;
 $\mathsf{ToDo} \leftarrow \mathsf{ToDo} \setminus \{v\}$\;
 \uIf{$\gamma(v) = \tBE$}{
  $g_v \leftarrow \pp(v)$\;
 }
 \Else{
 $S_v = \{w \in \ch(v) \mid \idom(w) = v\}$\;
  \uIf{$\gamma(v) = \tOR$}{
  $p_1 \leftarrow \prod_{w \in S_v} (1-g_w)$\;
  $p_2 \leftarrow \prod_{w \in \ch(v) \setminus S_v} (1-\FF{w})$\;
  $g_v \leftarrow 1-p_1p_2$\;
  }
  \Else{
  $p_1 \leftarrow \prod_{w \in S_v} g_w$\;
  $p_2 \leftarrow \prod_{w \in \ch(v) \setminus S_v} \FF{w}$\;
  $g_v \leftarrow p_1p_2$\;
  }
  $\mathsf{ToDo}_v \leftarrow \{w \in V\setminus\ch(v) \mid \idom(w) = v\}$\;
  \While{$\mathsf{ToDo}_v \neq \varnothing$}{
   Pick $w \in \mathsf{ToDo}_v$ maximal w.r.t. $\preceq$\;
   $\mathsf{ToDo}_v \leftarrow \mathsf{ToDo}_v \setminus \{w\}$\;
   $g_v \leftarrow g_v[\FF{w} \mapsto g_w]$\;
  }
 }
}
\Return{$g_{\RRR{T}}$}
\end{algorithm}

The complexity of Alg.~\ref{alg:SQFU} can be bound in terms of graph parameters of the DAG $T$. To do so, we first note that we can slightly rephrase the algorithm as follows. If $w$ is a child of $v$ and $w$ has only one parent, then $w$ is a maximal element of $\mathsf{ToDo}_v$. As such, in line 15 these $w$ will be picked first. Therefore, we may as well do this replacement in lines 9 and 11 directly. This leads to Alg.~\ref{alg:SQFUU}, which has the same functionality as Alg.~\ref{alg:SQFU} and therefore calculates $U(T)$ correctly. Note that the condition that $w$ has only one parent is equivalent to $\idom(w) = v$, which leads to our definition of $S_v$ in line 8.

Like the standard algorithm for reliability analysis for treelike FTs \cite{ruijters2015fault}, Alg.~\ref{alg:SQFUU} works bottom-up. However, it is more complicated due to the fact that our main objects of interest are squarefree polynomials rather than real numbers, and operating on these induces a larger complexity. To describe this complexity, we introduce the following notation:
\[
X = \{v \in V \mid v \textrm{ has multiple parents}\}.
\]
Since in Alg.~\ref{alg:SQFUU} line 8, the set $S_v$ consists of all children of $v$ with a single parent, the only $\FF{x}$ that are introduced satisfy $x \in X$. Thus each $g_v$ is an element of $\SFPA(X)$, and as such has at most $2^{|X|}$ terms. Multiplying two such polynomials has complexity $\mathcal{O}(4^{|X|})$, and since substitution is just multiplication by Lemma \ref{lem:sub}, substitution has the same complexity. Next, we count the number of multiplications and substitutions. The FT $T$ has $|X|$ nodes with more than 1 parent and 1 node with $0$ parents, so in total $|V|-|X|-1$ nodes have exactly one parent and are used in multiplications in lines 10 and 14 of Alg.~\ref{alg:SQFUU}; in particular, there are at most $|V|$ multiplications. Furthermore, for each $v$ one has $|\mathsf{ToDo}_v| \leq |X|$, hence at each $v$ there are at most $|X|$ substitutions in line 22. We conclude:

\begin{theorem} \label{thm:comp1}
Let $X$ be as above. Then Alg.~\ref{alg:SQFUU} has time complexity $\mathcal{O}(|V|(|X|+1)4^{|X|})$.
\end{theorem}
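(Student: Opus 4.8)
The plan is to express the running time as (number of algebraic operations) $\times$ (cost per operation), treating separately the two operations the algorithm performs on polynomials: the multiplications used to build $p_1$, $p_2$ and $g_v$ (lines 10, 11, 14, 15 and the combinations in lines 12, 16), and the substitutions in line 22. Everything hinges on the invariant that every $g_v$ produced by the algorithm lies in $\SFPA(X)$, where $X$ is the set of multiparent nodes. I would prove this by induction on $v$ in $\preceq$-increasing order. For a BE, $g_v = \pp(v) \in \RR \subseteq \SFPA(X)$. For a gate $v$, the formal variables that appear are exactly the $\FF{w}$ introduced when forming $p_2$ in lines 11 and 15, i.e.\ for $w \in \ch(v) \setminus S_v$; since for a child $w$ the equality $\idom(w) = v$ holds precisely when $v$ is the unique parent of $w$, every such $w$ has at least two parents and hence $w \in X$. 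The substitutions in line 22 only replace these variables by $g_w \in \SFPA(X)$ (valid by the induction hypothesis since $w \prec v$), so $g_v$ never leaves $\SFPA(X)$.

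Granting the invariant, each $g_v$ is specified by a coefficient vector indexed by the $2^{|X|}$ subsets of $X$. I would then record the per-operation costs. Multiplying two elements of $\SFPA(X)$ costs $\mathcal{O}(4^{|X|})$, since the product formula sums over all $2^{|X|}\times 2^{|X|}$ pairs of index sets. A substitution $\alpha[\FF{w}\mapsto g_w]$ is, by Lemma \ref{lem:sub}, equal to $\alpha[\FF{w}\mapsto 1]\cdot g_w + \alpha[\FF{w}\mapsto 0]\cdot(1-g_w)$, which is a fixed number of additions and multiplications in $\SFPA(X)$ and therefore also costs $\mathcal{O}(4^{|X|})$.

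It remains to count the operations. For the multiplications, the key bookkeeping observation is that each of the $|V|-|X|-1$ single-parent nodes $w$ contributes its factor to the product $p_1$ at its unique parent exactly once (lines 10 and 14), while each gate needs only a constant number of further multiplications to assemble $p_2$ and the product $p_1p_2$; summing over all nodes gives $\mathcal{O}(|V|)$ multiplications. For the substitutions, I would first show $\mathsf{ToDo}_v \subseteq X$: if some $w \in \mathsf{ToDo}_v$ had a single parent $u$ then $\idom(w) = u$, and $\idom(w) = v$ would force $u = v$ and hence $w \in \ch(v)$, contradicting $w \in V \setminus \ch(v)$. Thus $|\mathsf{ToDo}_v| \le |X|$, so the inner loop runs at most $|X|$ times per node and at most $|V|\,|X|$ times in total. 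Adding the two contributions yields $\mathcal{O}(|V|\,4^{|X|}) + \mathcal{O}(|V|\,|X|\,4^{|X|}) = \mathcal{O}(|V|(|X|+1)4^{|X|})$; the $\mathcal{O}(|E|)$ preprocessing to compute all $\idom(v)$ (Theorem \ref{thm:idom}) and to traverse $V$ in $\preceq$-order is dominated by this and can be absorbed.

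The step I expect to be most delicate is the operation-counting itself rather than anything conceptually deep. One must be careful that the single coarse estimate ``at most $|V|$ multiplications'' simultaneously accounts for the $p_1$ products, the $p_2$ products, and the final combinations across all gates, and---more importantly---that the chosen representation of $\SFPA(X)$ elements really does make both multiplication and the Lemma \ref{lem:sub} form of substitution run within $\mathcal{O}(4^{|X|})$. The invariant $g_v \in \SFPA(X)$ is exactly what prevents any intermediate polynomial from carrying more than $2^{|X|}$ terms, so its clean verification is the linchpin on which the entire estimate rests.
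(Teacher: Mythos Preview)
Your argument is correct and matches the paper's almost line for line: both establish the invariant $g_v\in\SFPA(X)$, bound each multiplication and (via Lemma~\ref{lem:sub}) each substitution by $\mathcal{O}(4^{|X|})$, count at most $|V|$ multiplications coming from the single-parent factors in $p_1$, and use $|\mathsf{ToDo}_v|\le|X|$ to bound the substitutions by $|V|\,|X|$. One small inaccuracy worth fixing: assembling $p_2$ at a gate is not a \emph{constant} number of multiplications but up to $|\ch(v)\setminus S_v|\le|X|$ of them (the paper glosses over this too), yet this extra $\mathcal{O}(|V|\,|X|)$ operations is already dominated by the substitution term and leaves the stated bound unchanged.
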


If $X = \varnothing$, then $T$ is treelike. In this case, Theorem \ref{thm:comp1} tells us that Alg.~\ref{alg:SQFUU} has time complexity $\mathcal{O}(|V|)$. Indeed, in this case $S_v = \ch(v)$ for all $v$, and each $g_v$ is a real number. Hence Alg.~\ref{alg:SQFUU} reduces to the standard bottom-up algorithm for treelike FTs, which is known to have linear time complexity. In fact, Theorem \ref{thm:comp1} generalizes this result: it shows that for bounded $|X|$, time complexity of Alg.~\ref{alg:SQFUU} is linear. This makes Alg.~\ref{alg:SQFUU} a useful tool if the non-tree topology of an FT is concentrated in only a few nodes.

The complexity bound of Theorem \ref{thm:comp1} can be sharpened by realizing that a variable $\FF{w}$ only occurs in the computation of $g_v$ if $w \prec v \preceq \idom(w)$. Using this as a bound we get the following result:

\begin{theorem} \label{thm:comp2}
Define
\[
c = \max_{v \in V} |\{w \in X \mid w \prec v \preceq \idom(w)\}|.
\]
Then Alg.~\ref{alg:SQFUU} has time complexity $\mathcal{O}(|V|(c+1)4^c)$.
\end{theorem}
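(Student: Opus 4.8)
The plan is to re-run the complexity count of Theorem~\ref{thm:comp1}, replacing the global bound $|X|$ on the number of formal variables by a \emph{local} bound attached to the node currently being processed. Concretely, for each $v \in V$ I would set
\[
X_v = \{w \in X \mid w \prec v \preceq \idom(w)\},
\]
so that $|X_v| \le c$ by the definition of $c$. The heart of the argument is the claim that, at every stage of the processing of $v$ in Alg.~\ref{alg:SQFUU}, the polynomial $g_v$ lies in $\SFPA(X_v)$; that is, the only formal variables ever occurring in $g_v$ are the $\FF{w}$ with $w \in X$ and $w \prec v \preceq \idom(w)$. Granting this, each $g_v$ has at most $2^c$ terms, so every multiplication and, via Lemma~\ref{lem:sub}, every substitution performed on these polynomials costs $\mathcal{O}(4^c)$ rather than $\mathcal{O}(4^{|X|})$.

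To establish the claim I would track the formal variables through the two phases of the loop body. When $g_v$ is first assigned (lines 10--16), its variables are the $\FF{w}$ with $w \in \ch(v) \setminus S_v$, i.e.\ the multiparent children of $v$; such $w$ satisfy $w \in X$ and $w \prec v$, and since $\idom(w)$ lies on the path $\RRR{T} \to v \to w$ while $\idom(w) \neq v$, also $v \prec \idom(w)$, so $w \in X_v$. For the substitution phase (lines 20--24) I would argue by induction on the substitution steps that the invariant $g_v \in \SFPA(X_v)$ is preserved. Each step substitutes $\FF{w} \mapsto g_w = g_{w,\infty}$ for some $w$ with $\idom(w) = v$; by Theorem~\ref{thm:mainext}, $g_{w,\infty} = \mon{U(T_w[\mathcal{I}_w])}$ lies in $\SFPA(\mathcal{I}_w)$, so the newly introduced variables are the $\FF{w'}$ with $w' \prec w \prec \idom(w')$. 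For each such $w'$ we have $w' \prec w \prec v$, hence $w' \prec v$; and applying Lemma~\ref{lem:idom} to $w' \prec w$, the alternative $\idom(w') \preceq w$ is excluded by $w \prec \idom(w')$, forcing $\idom(w) \preceq \idom(w')$, i.e.\ $v \preceq \idom(w')$. One also checks $w' \in X$, since a single-parent node would have its unique parent as immediate dominator, contradicting $w \prec \idom(w')$. Thus every introduced variable again lies in $X_v$, the substituted $\FF{w}$ is removed, and the invariant survives. I expect this containment claim, and in particular the verification that substitution never introduces a variable $\FF{w'}$ with $\idom(w') \not\succeq v$, to be the main obstacle: it is exactly here that the recursive description of $g_{w,\infty}$ from Theorem~\ref{thm:mainext} and the dominator inequality of Lemma~\ref{lem:idom} are both needed.

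With the claim in hand, the operation count is identical to that of Theorem~\ref{thm:comp1}. Exactly as there, the single-parent nodes account for at most $|V|$ multiplications in lines 10 and 14, each now of cost $\mathcal{O}(4^c)$, contributing $\mathcal{O}(|V|4^c)$. For the substitutions, every $w \in \mathsf{ToDo}_v$ satisfies $\idom(w) = v$, hence $w \prec v \preceq \idom(w)$ and $w \in X$, so $\mathsf{ToDo}_v \subseteq X_v$ and $|\mathsf{ToDo}_v| \le c$; there are therefore at most $c$ substitutions per node, each of cost $\mathcal{O}(4^c)$, contributing $\mathcal{O}(|V|c\,4^c)$. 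Summing the two contributions yields the stated bound $\mathcal{O}(|V|(c+1)4^c)$.
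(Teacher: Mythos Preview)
Your approach is exactly the one the paper sketches in the sentence preceding Theorem~\ref{thm:comp2}: bound the variable set locally by $X_v$ and rerun the count of Theorem~\ref{thm:comp1}. The paper itself gives no further detail, so your fleshed-out argument is more than what is provided there.

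There is one oversight in your treatment of the initial assignment. You write that after lines 10--16 the variables of $g_v$ are precisely the $\FF{w}$ with $w \in \ch(v) \setminus S_v$, but this ignores $p_1$: in Alg.~\ref{alg:SQFUU} the factor $p_1$ is built from the polynomials $g_w = g_{w,\infty}$ for $w \in S_v$, and these already carry formal variables from $\mathcal{I}_w$. Fortunately your own argument for the substitution phase covers this case verbatim, since every $w \in S_v$ also satisfies $\idom(w) = v$; the variables entering via $p_1$ are thus of the form $\FF{w'}$ with $w' \in \mathcal{I}_w$ for some $w$ with $\idom(w) = v$, and your Lemma~\ref{lem:idom} argument then gives $w' \in X_v$. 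With this patch the invariant $g_v \in \SFPA(X_v)$ holds throughout the processing of $v$, the multiplications forming $p_1$ are between polynomials in $\SFPA(X_v)$ and hence cost $\mathcal{O}(4^c)$ each, and the rest of your count goes through unchanged.
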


As far as I am aware, a comparable analysis that bounds the time complexity of BDD-based methods in terms of multiparent nodes does not exist in the literature. Such an analysis can be complicated by variable ordering heuristics for the BDD construction, and is beyond the scope of this paper. Thus the existence of a provable complexity bound in terms of the number of multiparent nodes is a major advantage of the SFPA method.

\section{Experiments}

\begin{table} \label{tab:results}
\centering
\begin{tabular}{llcc}
 & & SFPA & Storm \\ \hline 
\multirow{3}{*}{Benchmark 1} & minimum & 0.60 &  0.89  \\
 & median & 3.51  & 1.81 \\
 & maximum & 60${}^*$ &  7.91 \\ \hline
 \multirow{3}{*}{Benchmark 2} & minimum & 0.30  & 0.41 \\
 & median & 0.68  & 1.42 \\
 & maximum & 6.73  & 47.29 
\end{tabular}
\caption{Summary of the results: the minimum, median and maximum of the computation times (in seconds) of the two algorithms are displayed. ${}^*$Timeout at 60 seconds, attained 3 times.} \label{tab:results}
\end{table}

% \begin{table*} \label{tab:results}
% \centering
% \begin{tabular}{llcccc}
%  & & \multicolumn{2}{c}{SFPA} & \multicolumn{2}{c}{Storm} \\
%  & & unreliability & all probabilities & unreliability & all probabilities \\ \hline
% \multirow{3}{*}{Benchmark 1} & minimum & 0.60 & 1.73 & 0.89 & 0.72 \\
%  & median & 3.51 & 13.78 & 1.81 & 2.13\\
%  & maximum & 60${}^*$ & 60${}^*$ & 7.91 & 7.72\\ \hline
%  \multirow{3}{*}{Benchmark 2} & minimum & 0.30 & 0.67 & 0.41 & 0.53\\
%  & median & 0.68 & 2.90 & 1.42 & 1.42\\
%  & maximum & 6.73 & 60${}^*$ & 47.29 & 30.24
% \end{tabular}
% \caption{Summary of the results: the minimum, median and maximum of the computation times (in seconds) of the two algorithms are displayed.${}^*$ Timeout at 60 seconds.} \label{tab:results}
% \end{table*}

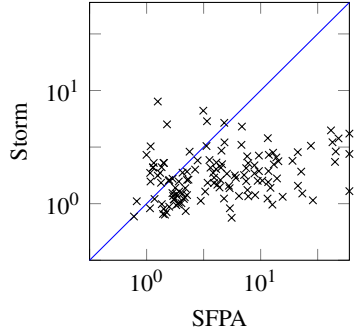
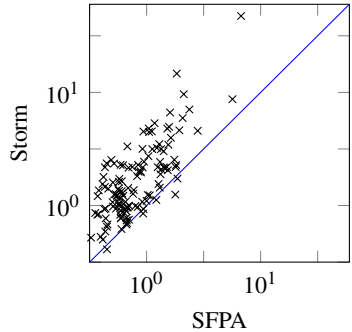
\begin{figure} \label{fig:results}
\centering
\begin{subfigure}{0.8\linewidth}
    		\begin{tikzpicture}
		\tikzstyle{every node}=[font=\small]
		\begin{axis}[
		width = 5cm,
		height = 5cm,
		xmin = -0.5,
		xmax = 1.7782,
		ymin = -0.5,
		ymax = 1.7782,
            xlabel = SFPA,
            ylabel = Storm,
		ylabel near ticks,
		xtick = {-0.5,0,0.5,1,1.5,2},
		xticklabels = {,$10^0$,,$10^1$,,$10^2$},
		ytick = {-0.5,0,0.5,1,1.5,2},
		yticklabels = {,$10^0$,,$10^1$,,$10^2$},
		]
		\addplot[only marks,mark=x] table[col sep = comma] {random_scram.csv};
		\addplot[mark=none, color=blue] coordinates {(-0.5,-0.5) (1.7782,1.7782)};
		\end{axis}
		\end{tikzpicture}
  \caption{Unreliability on benchmark set 1.}
  \vspace{1em}
\end{subfigure}
% \begin{subfigure}{0.8\linewidth}
%     		\begin{tikzpicture}
% 		\tikzstyle{every node}=[font=\small]
% 		\begin{axis}[
% 		width = 5cm,
% 		height = 5cm,
% 		xmin = -0.5,
% 		xmax = 1.7782,
% 		ymin = -0.5,
% 		ymax = 1.7782,
%             xlabel = SFPA,
%             ylabel = Storm,
% 		ylabel near ticks,
% 		% xlabel = {$\mathtt{MILP_{Mod}}$ (s)}, 
% 		% ylabel = {$\mathtt{MILP}$ (s)},
% 		xtick = {-0.5,0,0.5,1,1.5,2},
% 		xticklabels = {,$10^0$,,$10^1$,,$10^2$},
% 		ytick = {-0.5,0,0.5,1,1.5,2},
% 		yticklabels = {,$10^0$,,$10^1$,,$10^2$},
% 		]
% 		\addplot[only marks,mark=x] table[col sep = comma] {random_scram_full.csv};
% 		\addplot[mark=none, color=blue] coordinates {(-0.5,-0.5) (1.7782,1.7782)};
% 		\end{axis}
% 		\end{tikzpicture}
%   \caption{All failure probabilities on benchmark set 1.}
%   \vspace{1em}
% \end{subfigure}
\begin{subfigure}{0.8\linewidth}
    		\begin{tikzpicture}
		\tikzstyle{every node}=[font=\small]
		\begin{axis}[
		width = 5cm,
		height = 5cm,
		xmin = -0.5,
		xmax = 1.7782,
		ymin = -0.5,
		ymax = 1.7782,
            xlabel = SFPA,
            ylabel = Storm,
		ylabel near ticks,
		xtick = {-0.5,0,0.5,1,1.5,2},
		xticklabels = {,$10^0$,,$10^1$,,$10^2$},
		ytick = {-0.5,0,0.5,1,1.5,2},
		yticklabels = {,$10^0$,,$10^1$,,$10^2$},
		]
		\addplot[only marks,mark=x] table[col sep = comma] {almost_trees.csv};
		\addplot[mark=none, color=blue] coordinates {(-0.5,-0.5) (1.7782,1.7782)};
		\end{axis}
		\end{tikzpicture}
  \caption{Unreliability on benchmark set 2.}
%   \vspace{1em}
% \end{subfigure}
% \begin{subfigure}{0.8\linewidth}
%     		\begin{tikzpicture}
% 		\tikzstyle{every node}=[font=\small]
% 		\begin{axis}[
% 		width = 5cm,
% 		height = 5cm,
% 		xmin = -0.5,
% 		xmax = 1.7782,
% 		ymin = -0.5,
% 		ymax = 1.7782,
%             xlabel = SFPA,
%             ylabel = Storm,
% 		ylabel near ticks,
% 		% xlabel = {$\mathtt{MILP_{Mod}}$ (s)}, 
% 		% ylabel = {$\mathtt{MILP}$ (s)},
% 		xtick = {-0.5,0,0.5,1,1.5,2},
% 		xticklabels = {,$10^0$,,$10^1$,,$10^2$},
% 		ytick = {-0.5,0,0.5,1,1.5,2},
% 		yticklabels = {,$10^0$,,$10^1$,,$10^2$},
% 		]
% 		\addplot[only marks,mark=x] table[col sep = comma] {almost_trees_full.csv};
% 		\addplot[mark=none, color=blue] coordinates {(-0.5,-0.5) (1.7782,1.7782)};
% 		\end{axis}
% 		\end{tikzpicture}
%   \caption{All failure probabilities on benchmark set 2.}
\end{subfigure}
\caption{Timing comparison of SFPA and Storm for calculating unreliability. Times are in seconds, timeout at 60s.} \label{fig:results}
\end{figure}

We perform experiments to test $\SQFU$'s performance, as implemented in Python. All experiments are performed on a Ubuntu virtual machine with 6 cores and 8GB ram, running on a PC with an Intel  Core i7-10750HQ 2.8GHz processor and 16GB memory. We compare the performance to that of Storm-dft, a state-of-the-art model checker that calculates FT unreliability via a BDD-based approach with modularization \cite{basgoze2022bdds}. We compare performance on two benchmark sets of FTs:
\begin{enumerate}
\item A collection of 128 randomly generated FTs used as a benchmark set in \cite{basgoze_daniel_2022_6390998}. These FTs have, on average, 89.3 nodes, of which 10.3 have multiple parents.
\item A new randomly generated collection of 128 FTs, created using SCRAM \cite{SCRAM}. These FTs have, on average, 123.8 nodes, 4.1 of which have multiple parents.
\end{enumerate}

The second benchmark set was created in order to validate the theoretical results of Section \ref{sec:comp}, where it was shown that the complexity of SFPA depends on the number of nodes with multiple parents. We compute both the unreliability of each FT and measure the time of both computations (timeout: 60 seconds). The results are given in Table \ref{tab:results} and Figure \ref{fig:results}. As one can see, Storm largely outperforms SFPA on benchmark set 1, with lower computation times on 76\% of the unreliability calculations.
On benchmark set 2, SFPA fares considerably better, outperforming Storm on 95\% of all FTs for unreliability calculation: on average SFPA takes only 54\% of the computation time of Storm. The fact that SFPA is more efficient on this benchmark set can be understood from Theorem \ref{thm:comp1}, which shows that computational complexity of SFPA is low when the number of multiparent nodes is low. By contrast, for a BDD-based approach the presence of \emph{any} multiparent nodes means one cannot use the bottom-up algorithm and has to rely on creating the BDD, which is usually slower than the bottom-up approach. Furthermore, modularization may only be of limited use depending on the position of the multiparent nodes.

% On benchmark set 2, SFPA fares considerably better, outperforming Storm on 95\% of all FTs for unreliability calculation: on average SFPA takes only 54\% of the computation time of Storm. For calculating all failure probabilities this is only 10\%, again reflecting the fact that Storm is efficient at calculating all failure probabilities. The fact that SFPA is more efficient at unreliability calculation can be understood from Theorem \ref{thm:comp1}, which shows that computational complexity of SFPA is low when the number of multiparent nodes is low. By contrast, for a BDD-based approach the presence of \emph{any} multiparent nodes means one cannot use the bottom-up algorithm and has to rely on creating the BDD, which is usually slower than the bottom-up approach. Furthermore, modularization may only be of limited use depending on the position of the multiparent nodes.

Overall, we can conclude that for calculating unreliability SFPA is competitive with the state-of-the-art, and is significantly faster on an FT benchmark set with fewer multiparent nodes.

\section{Conclusion}

In this paper, we have introduced SFPA, a novel algorithm for calculating fault tree unreliability based on squarefree polynomial algebras. We have proven its validity and given complexity bounds in terms of the number of multiparent nodes. Experiments show that it is significantly faster than the state of the art on FTs with few multiparent nodes.

There are several directions for future work. First, our proof-of-concept Python implementation of SFPA can undoubtedly be improved, leading to faster computation. Such improvements can be done on the theoretical side as well. For example, one could introduce a new formal variable $\mathsf{U}_v$ for $1-\FF{v}$; this would decrease the number of terms in the expression of $g_v$ when $v$ is an $\tOR$-gate from $2^{|\ch(v)|}-1$ to $2$, hopefully leading to faster computation. In this case, new computation rules such as $\FF{v}\mathsf{U}_v = 1$ need to be introduced.

Second, our experimental results show that a BDD-based method works best for FTs with more multiparent nodes, while SFPA works best for FTs with fewer multiparent nodes. It would be interesting to see a more extensive experimental evaluation that investigates what the break-even point is. Such an experimental evaluation can be augmented by incorporating real-world case studies, to test the effectiveness of SFPA in practice.

On the other hand, it would be interesting to see to what extent SFPA-like methods can be applied to other problems in FT analysis, such as the analysis of dynamic FTs, which also consider time-dependent gates and behaviour. A good candidate is the analysis of attack trees (ATs), the security counterpart of FTs. Quantitative analysis of (non-dynamic) ATs is also done using BDDs \cite{lopuhaa2022efficient}, which has the same issues as BDD-based FT analysis. We expect that SFPA-like methods can be extended to ATs as well.

\section*{Acknowledgements}

This research has been partially funded by ERC Consolidator grant 864075 CAESAR and the European Union’s Horizon 2020 research and innovation programme under the Marie Skłodowska-Curie grant agreement No. 101008233.

\bibliographystyle{apalike}
{\small
\bibliography{mybibliography}}

\begin{thebibliography}{}

\bibitem[Basg{\"o}ze et~al., 2022]{basgoze2022bdds}
Basg{\"o}ze, D., Volk, M., Katoen, J.-P., Khan, S., and Stoelinga, M. (2022).
\newblock Bdds strike back: efficient analysis of static and dynamic fault
  trees.
\newblock In {\em NASA Formal Methods Symposium}, pages 713--732. Springer.

\bibitem[Basgöze et~al., 2022]{basgoze_daniel_2022_6390998}
Basgöze, D., Volk, M., Katoen, J.-P., Khan, S., and Stoelinga, M. (2022).
\newblock {Artifact for "BDDs Strike Back - Efficient Analysis of Static and
  Dynamic Fault Trees"}.

\bibitem[Bobbio et~al., 2013]{bobbio2013methodology}
Bobbio, A., Egidi, L., and Terruggia, R. (2013).
\newblock A methodology for qualitative/quantitative analysis of weighted
  attack trees.
\newblock {\em IFAC Proceedings Volumes}, 46(22):133--138.

\bibitem[Bouissou et~al., 1997]{bouissou1997bdd}
Bouissou, M., Bruyere, F., and Rauzy, A. (1997).
\newblock Bdd based fault-tree processing: a comparison of variable ordering
  heuristics.
\newblock In {\em Proceedings of European Safety and Reliability Association
  Conference, ESREL’97}.

\bibitem[IsoTree, 2023]{FT+}
IsoTree (2023).
\newblock {FaultTree+}.
\newblock available online at
  \url{https://www.isograph.com/software/reliability-workbench/fault-tree-analysis-software/}.

\bibitem[L{\^e} et~al., 2014]{le2014novel}
L{\^e}, M., Weidendorfer, J., and Walter, M. (2014).
\newblock A novel variable ordering heuristic for bdd-based k-terminal
  reliability.
\newblock In {\em 2014 44th Annual IEEE/IFIP International Conference on
  Dependable Systems and Networks}, pages 527--537. IEEE.

\bibitem[Lengauer and Tarjan, 1979]{lengauer1979fast}
Lengauer, T. and Tarjan, R.~E. (1979).
\newblock A fast algorithm for finding dominators in a flowgraph.
\newblock {\em ACM Transactions on Programming Languages and Systems (TOPLAS)},
  1(1):121--141.

\bibitem[Lopuha{\"a}-Zwakenberg et~al., 2022]{lopuhaa2022efficient}
Lopuha{\"a}-Zwakenberg, M., Budde, C.~E., and Stoelinga, M. (2022).
\newblock Efficient and generic algorithms for quantitative attack tree
  analysis.
\newblock {\em IEEE Transactions on Dependable and Secure Computing}.

\bibitem[Lopuhaä-Zwakenberg, 2023]{code}
Lopuhaä-Zwakenberg, M. (2023).
\newblock {Fault tree reliability analysis via squarefree polynomials}.

\bibitem[Pandey, 2005]{pandey2005fault}
Pandey, M. (2005).
\newblock Fault tree analysis.
\newblock {\em Lecture notes, University of Waterloo, Waterloo}.

\bibitem[Prosser, 1959]{prosser1959applications}
Prosser, R.~T. (1959).
\newblock Applications of boolean matrices to the analysis of flow diagrams.
\newblock In {\em Papers presented at the December 1-3, 1959, eastern joint
  IRE-AIEE-ACM computer conference}, pages 133--138.

\bibitem[Rakhimov, 2019]{SCRAM}
Rakhimov, O. (2019).
\newblock Scram.
\newblock available online at \url{https://github.com/rakhimov/scram}.

\bibitem[Rauzy, 1993]{rauzy1993new}
Rauzy, A. (1993).
\newblock New algorithms for fault trees analysis.
\newblock {\em Reliability Engineering \& System Safety}, 40(3):203--211.

\bibitem[Rauzy and Dutuit, 1997]{rauzy1997exact}
Rauzy, A. and Dutuit, Y. (1997).
\newblock Exact and truncated computations of prime implicants of coherent and
  non-coherent fault trees within aralia.
\newblock {\em Reliability Engineering \& System Safety}, 58(2):127--144.

\bibitem[Reliotech, 2023]{TopEventFTA}
Reliotech (2023).
\newblock {TopEvent FTA}.
\newblock available online at
  \url{https://www.fault-tree-analysis.com/free-fault-tree-analysis-software}.

\bibitem[Ruijters et~al., 2019]{ruijters2019ffort}
Ruijters, E., Budde, C.~E., Nakhaee, M.~C., Stoelinga, M.~I., Bucur, D.,
  Hiemstra, D., and Schivo, S. (2019).
\newblock Ffort: a benchmark suite for fault tree analysis.

\bibitem[Ruijters and Stoelinga, 2015]{ruijters2015fault}
Ruijters, E. and Stoelinga, M. (2015).
\newblock Fault tree analysis: A survey of the state-of-the-art in modeling,
  analysis and tools.
\newblock {\em Computer science review}, 15:29--62.

\bibitem[Valiant, 1979]{valiant1979complexity}
Valiant, L.~G. (1979).
\newblock The complexity of enumeration and reliability problems.
\newblock {\em siam Journal on Computing}, 8(3):410--421.

\bibitem[Watson, 1961]{watson1961launch}
Watson, H.~A. (1961).
\newblock Launch control safety study.
\newblock {\em Bell labs}.

\end{thebibliography}

%\end{document}

\section*{\uppercase{Appendix}}

\subsection*{Proof of Theorem \ref{thm:NP}}

Before we prove the theorem itself, we first introduce some notation. Consider an unaugmented FT $T' = (V,E,\gamma)$, i.e., a FT without $\pp$ specified. This does not affect the definition of the structure function or cut sets. For $\vec{f},\vec{f}' \in \BB^{\BE{T'}}$, we write $\vec{f} \preceq \vec{f}'$ if $f_v \leq f'_v$ for all $v \in \BE{T}$. A cut set $\vec{f}$ is called \emph{minimal} if there does not exist another cut set $\vec{f}'$ with $\vec{f}' \prec \vec{f}$. The following problem is NP-hard \cite{rauzy1993new}:

\begin{problem} \label{prob:mcs}
Given an unaugmented FT $T'$, find one of its minimal cut sets.
\end{problem}

Therefore, Theorem \ref{thm:NP} is a consequence of the following result:

\begin{theorem}
Problem \ref{prob:mcs} can be reduced to Problem \ref{prob:ut}.
\end{theorem}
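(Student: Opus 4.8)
The plan is to show that any algorithm solving Problem~\ref{prob:ut} can be converted, with only polynomial overhead, into one solving Problem~\ref{prob:mcs}; since the latter is NP-hard, Theorem~\ref{thm:NP} follows. Throughout I would use that the structure function of a FT with only $\tAND$/$\tOR$-gates is \emph{monotone}: if $\vec f \preceq \vec f'$ and $\struc{T}(\RRR{T},\vec f)=\tone$, then also $\struc{T}(\RRR{T},\vec f')=\tone$. Hence $\CS{T}$ is upward closed, its $\preceq$-minimal elements are exactly the minimal cut sets, and $\vec 1 \in \CS{T}$ iff $\CS{T}\neq\varnothing$. Given an unaugmented FT $T'$, the idea is to attach a carefully chosen probability function $\pp$ and read a minimal cut set off the number $U(T)$.

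The conceptual heart of the reduction is to choose $\pp$ so that $U(T)$ is dominated by the contribution of a single, canonically chosen minimal cut set. Enumerate $\BE{T'}=\{v_1,\dots,v_n\}$ and set $\pp(v_i)=\epsilon^{2^{i-1}}$ for a small $\epsilon\in(0,1)$. Then
\[
U(T)=\sum_{\vec f\in\CS{T}}\ \prod_{i\colon f_{v_i}=\tone}\epsilon^{2^{i-1}}\ \prod_{i\colon f_{v_i}=\tzero}\bigl(1-\epsilon^{2^{i-1}}\bigr),
\]
so each cut set $\vec f$ contributes a term of size $\epsilon^{w(\vec f)}(1+O(\epsilon))$, where $w(\vec f)=\sum_{i\colon f_{v_i}=\tone}2^{i-1}$. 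The weights $w(\vec f)$ are pairwise distinct, being the binary encodings of the sets, so for small $\epsilon$ the largest term is the one of smallest weight, attained at the cut set $\vec f^{*}$ minimizing $w$. Since the weights are positive and $\CS{T}$ is upward closed, $\vec f^{*}$ has no proper cut subset and is therefore a minimal cut set, and its weight, read in binary, is precisely its indicator vector. Thus once the leading exponent of $U(T)$ is isolated, $\vec f^{*}$ decodes directly.

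The main obstacle is quantitative rather than conceptual: to make the term of $\vec f^{*}$ dominate the sum of the up to $2^{n}$ remaining terms and to recover $w(\vec f^{*})$ exactly, one must take $\epsilon$ as small as $2^{-\Theta(n)}$, and then $\pp(v_n)=\epsilon^{2^{n-1}}$ has exponentially many bits, so a single many-one query is not of polynomial size. I would resolve this by casting the reduction as a polynomial-time Turing reduction that recovers the same distinguished cut set $\vec f^{*}$ one coordinate at a time, from the highest index downward: having fixed the membership of $v_{i+1},\dots,v_n$, decide whether $v_i$ can be excluded by a single oracle call on the assignment that keeps the decided higher coordinates, sets $v_i$ to $\tzero$, and (by monotonicity, the most favourable completion) sets all still-undecided lower-index BEs to $\tone$. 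Since with $\pp$ valued in $\{\tzero,\tone\}$ one has $U(T)=\struc{T}(\RRR{T},\cdot)\in\{\tzero,\tone\}$, each call is an exact, polynomial-size test of whether a cut set with the forced choices exists; exclude $v_i$ exactly when the answer is $\tone$, and handle the degenerate case by the single query $\pp\equiv\tone$. Altogether $O(n)$ oracle calls of polynomial size yield a minimal cut set. The only genuine work is verifying that this greedy high-to-low decoding returns the $w$-minimal, hence $\preceq$-minimal, cut set and that all query sizes stay polynomial.
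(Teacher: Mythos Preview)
Your single-query argument is exactly the paper's proof. The paper fixes $\epsilon=1/10$, enumerates $\BE{T'}=\{v_0,\ldots,v_{n-1}\}$, sets $\pp(v_i)=10^{-2^{i}}$, writes $\kappa(\vec f)=\sum_{i:f_{v_i}=\tone}2^{i}$, and proves the sandwich
\[
10^{-\kappa(\vec g)} \;\le\; U(T) \;\le\; \sum_{\vec f\in M}10^{-\kappa(\vec f)} \;<\; 10^{-\kappa(\vec g)+1},
\]
the last step because the $\kappa(\vec f)$ over minimal cut sets are pairwise distinct integers $\ge\kappa(\vec g)$, so the sum is dominated by a geometric series. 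Hence $\kappa(\vec g)=-\lfloor\log_{10}U(T)\rfloor$ and $\vec g$ is recovered as its binary expansion from a single oracle call. The paper does \emph{not} address the bit-length objection you raise; it implicitly works in a model where $10^{-2^{i}}$ is an admissible input of unit cost.

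Your Turing-reduction detour is formally a correct reduction, but it has a side effect you should be aware of. Every oracle call you make has $\pp\in\{0,1\}^{n}$, and for such inputs $U(T)=\struc{T}(\RRR{T},\pp)$ is just a linear-time circuit evaluation. So your greedy high-to-low procedure is in fact a polynomial-time algorithm for Problem~\ref{prob:mcs} that needs no oracle at all. That still establishes the literal statement ``Problem~\ref{prob:mcs} reduces to Problem~\ref{prob:ut}'', but it transfers no hardness to Problem~\ref{prob:ut}: the instances you query are trivially solvable, and your own argument shows that finding \emph{some} inclusion-minimal cut set is in~P. If the intended payoff is Theorem~\ref{thm:NP}, you must either stay with the single-call route and accept the succinctly specified probabilities (as the paper does), or start from a genuinely hard variant of the cut-set problem---e.g.\ minimum-cardinality cut sets---for which $\{0,1\}$-valued queries no longer suffice.
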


\begin{proof}
Let $T' = (V,E,\gamma)$ be an unaugmented FT, and take any enumeration $\BE{T'} = \{v_0,\ldots,v_{n-1}\}$. Define $\pp\colon \BE{T'} \rightarrow [0,1]$ by $\pp(v_i) = 10^{-2^i}$, and let $T = (V,E,\gamma,\pp)$ be the resulting augmented fault tree. Let $M \subseteq \CS{T}$ be the set of minimal cut sets. For $\vec{f} \in M$, define $\kappa(\vec{f}) = -\log_{10}\left(\prod_{f_b = \tone} \pp(b)\right)$; let $\vec{g} \in M$ be such that $\kappa(\vec{g})$ is minimal.

Let $\vec{F}$ be the random variable in $\BB^{\BE{T}}$ defined in Definition \ref{def:unreliability}; then $\prob(\vec{f} \preceq \vec{F}) = 10^{-\kappa(\vec{f})}$ for all $\vec{f} \in M$. Since $\vec{F} \in \CS{T}$ if and only if there is a $\vec{f} \in M$ such that $\vec{f} \preceq \vec{F}$, it follows that

\begin{align}
10^{-\kappa(\vec{g})} &= \prob(\vec{g} \preceq \vec{F}) \nonumber \\
&\leq \prob(\exists \vec{f} \in M\colon \vec{f} \preceq \vec{F}) \nonumber\\
&= U(T) \label{eq:comp1}\\
&\leq \sum_{\vec{f} \in M} \prob(\vec{f} \preceq \vec{F}) \nonumber \\
&= \sum_{\vec{f} \in M} 10^{-\kappa(\vec{f})}. \label{eq:comp2}
\end{align}

We now give an upper bound on $\sum_{\vec{f} \in M} 10^{-\kappa(\vec{f})}$ in terms of $\kappa(\vec{g})$. If $\{v \in \BE{T} \mid f_v = \tone\} = \{v_{i_1},\ldots,v_{i_k}\}$, then $\kappa(\vec{f}) = \sum_{j=1}^k 2^{i_j}$. Thus, each $\kappa(\vec{f})$ is an integer, whose binary representation is equal to $\vec{f}$ (i.e., its digit for $2^i$ indicates whether $f_{v_i} = \tone$). As a result, the set $\{\kappa(\vec{f}) \mid \vec{f} \in M\}$ is a set of distinct integers, the least of which is $\kappa(\vec{g})$. Hence
\begin{equation}
\sum_{\vec{f} \in M} 10^{-\kappa(\vec{f})} \leq \sum_{i=0}^{|M|-1} 10^{-\kappa(\vec{g})-i} < 10^{-\kappa(\vec{g})+1} \label{eq:comp3}
\end{equation}
From \eqref{eq:comp1}, \eqref{eq:comp2}, \eqref{eq:comp3} it follows that that $-\lfloor \log_{10}(U(T)) \rfloor = \kappa(\vec{g})$. Thus from $U(T)$ we find $\kappa(\vec{g})$ in polynomial time, and since $\vec{g}$ is the binary representation of the integer $\kappa(\vec{g})$, we also find the MCS $\vec{g}$.
\end{proof}

\subsection*{Proof of Lemma \ref{lem:idom}}

Suppose $\idom(v) \not \preceq w$. Since $v \prec w$, any path $\RRR{T} \rightarrow w$ can be extended to a path $\RRR{T} \rightarrow v$; hence $\idom(v)$ lies on the extended path. Since $\idom(v) \not \preceq w$, it cannot lie on the subpath $w \rightarrow v$, so it lies on the path $\RRR{T} \rightarrow w$. Since this is true for every path $\RRR{T} \rightarrow w$, it follows that $\idom(v)$ is a dominator of $w$; hence $\idom(w) \preceq \idom(v)$. \qed

\subsection*{Proof of Lemma \ref{lem:sub}}

We have
\begin{align*}
&\alpha[\FF{x} \mapsto \beta]\\
&= \sum_{\substack{Y \subseteq X\colon\\x \in Y}} \alpha_Y \left(\prod_{x' \in Y\setminus\{x\}} \FF{x'}\right)\cdot \beta + \sum_{\substack{Y \subseteq X\colon\\x \notin Y}} \alpha_Y \left(\prod_{x' \in Y} \FF{x'}\right) \\
&= \left(\sum_{\substack{Y \subseteq X\colon\\x \in Y}} \alpha_Y \prod_{x' \in Y\setminus\{x\}} \FF{x'}+ \sum_{\substack{Y \subseteq X\colon\\x \notin Y}} \alpha_Y \prod_{x' \in Y} \FF{x'}\right)\cdot \beta\\
& \quad \quad + \left(\sum_{\substack{Y \subseteq X\colon\\x \notin Y}} \alpha_Y \prod_{x' \in Y} \FF{x'}\right)\cdot (1-\beta)\\
&= \left(\sum_{Y \subseteq X} \alpha_Y \prod_{x' \in Y\setminus\{x\}} \FF{x'}\right)\cdot \beta\\
& \quad \quad + \left(\sum_{\substack{Y \subseteq X\colon\\x \notin Y}} \alpha_Y \prod_{x' \in Y} \FF{x'}\right)\cdot (1-\beta)\\
&= \alpha[\FF{x} \mapsto 1]\cdot \beta + \alpha[\FF{x} \mapsto 0] \cdot (1-\beta). \qed
\end{align*}

\subsection*{Proof of Lemma \ref{lem:subarit}}

The first statement follows immediately from the coefficientwise definitions of addition and substitution. For the second statement, we prove it first for the special cases $\beta=1$ and $\beta=0$. In the first case, for each $Y \subseteq X \setminus \{x\}$ one has
\[
\alpha[\FF{x} \mapsto 1]_Y = \alpha_Y + \alpha_{Y \cup \{x\}}.
\]
It follows that
\begin{align*}
&(\alpha_1\alpha_2)[\FF{x} \mapsto 1]_Y\\
&= (\alpha_1\alpha_2)_Y + (\alpha_1\alpha_2)_{Y \cup \{x\}} \\
&= \sum_{\substack{Y_1,Y_2 \subseteq Y\colon\\ Y_1 \cup Y_2 = Y}} \alpha_{1,Y_1}\alpha_{2,Y_2} + \sum_{\substack{Y'_1,Y'_2 \subseteq Y\cup \{x\}\colon\\ Y'_1 \cup Y'_2 = Y\cup \{x\}}} \alpha_{1,Y'_1}\alpha_{2,Y'_2}.
\end{align*}
In the second summation we can distinguish the cases $x \in Y'_1 \setminus Y'_2$, $x \in Y'_2 \setminus Y'_1$ and $x \in Y'_1 \cap Y'_2$. Substituting $Y_i := Y'_i \setminus \{x\}$ in the second summation, we get 
\begin{align*}
&(\alpha_1\alpha_2)[\FF{x} \mapsto 1]_Y \\
&= \sum_{\substack{Y_1,Y_2 \subseteq Y\colon\\ Y_1 \cup Y_2 = Y}} \alpha_{1,Y_1}\alpha_{2,Y_2} 
 + \sum_{\substack{Y_1,Y_2 \subseteq Y\colon\\ Y_1 \cup Y_2 = Y}} \Big( \alpha_{1,Y_1\cup\{x\}}\alpha_{2,Y_2} \\
 & \quad \quad +\alpha_{1,Y_1}\alpha_{2,Y_2\cup\{x\}}+\alpha_{1,Y_1\cup\{x\}}\alpha_{2,Y_2\cup\{x\}}\Big)\\
&= \sum_{\substack{Y_1,Y_2 \subseteq Y\colon\\ Y_1 \cup Y_2 = Y}} \Big( \alpha_{1,Y_1}\alpha_{2,Y_2} + \alpha_{1,Y_1\cup\{x\}}\alpha_{2,Y_2}\\
&\quad \quad +\alpha_{1,Y_1}\alpha_{2,Y_2\cup\{x\}}+\alpha_{1,Y_1\cup\{x\}}\alpha_{2,Y_2\cup\{x\}}\Big)\\
&= \sum_{\substack{Y_1,Y_2 \subseteq Y\colon\\ Y_1 \cup Y_2 = Y}} \alpha_1[\FF{x} \mapsto 1]_{Y_1}\alpha_2[\FF{x} \mapsto 1]_{Y_2}\\
&= (\alpha_1[\FF{x} \mapsto 1]\cdot \alpha_2[\FF{x} \mapsto 1])_Y.
\end{align*}
This proves the statement for $\beta = 1$. The proof for $\beta = 0$ is analogous, this time using $\alpha[\FF{x} \mapsto 0]_Y = \alpha_Y$ for $Y \subseteq X \setminus \{x\}$.

Now we go to general $\beta$, for which we have
\begin{align}
&\alpha_1[\FF{x} \mapsto \beta] \cdot \alpha_2[\FF{x} \mapsto \beta] \nonumber\\
&= (\alpha_1[\FF{x} \mapsto 1]\cdot \beta + \alpha_1[\FF{x} \mapsto 0] \cdot (1-\beta)) \label{eq:subarit1} \\
&\quad \quad \cdot (\alpha_2[\FF{x} \mapsto 1]\cdot \beta + \alpha_2[\FF{x} \mapsto 0] \cdot (1-\beta)) \nonumber\\
&= \alpha_1[\FF{x} \mapsto 1]\cdot \alpha_2[\FF{x} \mapsto 1] \cdot \beta^2 \nonumber \\
&\quad \quad + \alpha_1[\FF{x} \mapsto 0]\cdot \alpha_2[\FF{x} \mapsto 0] \cdot (1-2\beta+\beta^2) \nonumber\\
&\quad \quad + \Big(\alpha_1[\FF{x} \mapsto 1]\cdot \alpha_2[\FF{x} \mapsto 0]\nonumber\\
&\quad \quad +\alpha_1[\FF{x} \mapsto 0]\cdot \alpha_2[\FF{x} \mapsto 1]\Big) \cdot (\beta-\beta^2) \nonumber \\
&= \alpha_1[\FF{x} \mapsto 1]\cdot \alpha_2[\FF{x} \mapsto 1] \cdot \beta \label{eq:subarit2}\\
&\quad \quad + \alpha_1[\FF{x} \mapsto 0]\cdot \alpha_2[\FF{x} \mapsto 0] \cdot (1-\beta) \nonumber \\
&= (\alpha_1\alpha_2)[\FF{x} \mapsto 1] \cdot \beta \label{eq:subarit3}\\
& \quad \quad + (\alpha_1\alpha_2)[\FF{x} \mapsto 0] \cdot (1-\beta) \nonumber\\
&= (\alpha_1\alpha_2)[\FF{x} \mapsto \beta]. \label{eq:subarit4}
\end{align}
Here we used Lemma \ref{lem:sub} in \eqref{eq:subarit1} and \eqref{eq:subarit4}, the fact that $\beta^2 = \beta$ in \eqref{eq:subarit2}, and our result for $\beta = 0$ and $\beta = 1$ in \eqref{eq:subarit3}. \qed

\subsection*{Proof of Lemma \ref{lem:subexchange}}

As for Lemma \ref{lem:subarit} we first prove this for $\beta_1,\beta_2 \in \{0,1\}$. First suppose $\beta_1 = \beta_2 = 1$. Using the rule $\alpha[\FF{x} \mapsto 1]_Y = \alpha_Y + \alpha_{Y \cup \{x\}}$ from the proof of Lemma \ref{lem:subarit}, we find, for $Y \subseteq X \setminus \{x_1,x_2\}$:
\begin{align*}
&\alpha[\FF{x_1} \mapsto 1][\FF{x_2} \mapsto 1]_Y\\
&= \alpha[\FF{x_1} \mapsto 1]_Y + \alpha[\FF{x_1} \mapsto 1]_{Y \cup \{x_2\}} \\
&= \alpha_Y + \alpha_{Y \cup \{x_1\}} + \alpha_{Y \cup \{x_2\}} + \alpha_{Y \cup \{x_1,x_2\}}\\
&= \alpha[\FF{x_2} \mapsto 1][\FF{x_1} \mapsto 1]_Y.
\end{align*}
The other cases for $\beta_1,\beta_2 \in \{0,1\}$ are completely analogous, also using the rule $\alpha[\FF{x} \mapsto 0]_Y = \alpha_Y$. Since the substitution order does not matter, we write e.g. $\alpha[\FF{x_1} \mapsto 1, \FF{x_2} \mapsto 0]$.

Now consider general $\beta_1,\beta_2$. Through repeated use of Lemmas \ref{lem:sub} and \ref{lem:subarit} we find
\begin{align*}
&\alpha[\FF{x_1} \mapsto \beta_1][\FF{x_2} \mapsto \beta_2]\\
&= \alpha[\FF{x_1} \mapsto \beta_1][\FF{x_2} \mapsto 1]\cdot \beta_2\\
&\quad + \alpha[\FF{x_1} \mapsto \beta_1][\FF{x_2} \mapsto 0] \cdot (1-\beta_2) \\
&= \Big(\alpha[\FF{x_1} \mapsto 1] \cdot \beta_1 + \alpha[\FF{x_1} \mapsto 0] \cdot (1-\beta_1)\Big)[\FF{x_2} \mapsto 1]\cdot \beta_2\\
&\quad + \Big(\alpha[\FF{x_1} \mapsto 1] \cdot \beta_1 \\
&\quad + \alpha[\FF{x_1} \mapsto 0] \cdot (1-\beta_1)\Big)[\FF{x_2} \mapsto 0]\cdot (1-\beta_2)\\
&= \Big(\left(\alpha[\FF{x_1} \mapsto 1] \cdot \beta_1\right)[\FF{x_2} \mapsto 1]\\
&\quad +\left(\alpha[\FF{x_1} \mapsto 0] \cdot (1-\beta_1)\right)[\FF{x_2} \mapsto 1] \Big)\cdot \beta_2\\
& \quad + \Big(\left(\alpha[\FF{x_1} \mapsto 1] \cdot \beta_1\right)[\FF{x_2} \mapsto 0]\\
&\quad +\left(\alpha[\FF{x_1} \mapsto 0] \cdot (1-\beta_1)\right)[\FF{x_2} \mapsto 0] \Big)\cdot (1-\beta_2)\\
&= \Big(\alpha[\FF{x_1} \mapsto 1][\FF{x_2} \mapsto 1] \cdot \beta_1[\FF{x_2} \mapsto 1]\\
&\quad +\alpha[\FF{x_1} \mapsto 0][\FF{x_2} \mapsto 1] \cdot (1-\beta_1)[\FF{x_2} \mapsto 1] \Big)\cdot \beta_2\\
& \quad + \Big(\alpha[\FF{x_1} \mapsto 1][\FF{x_2} \mapsto 0] \cdot \beta_1[\FF{x_2} \mapsto 0]\\
&\quad + \alpha[\FF{x_1} \mapsto 0][\FF{x_2} \mapsto 0] \cdot (1-\beta_1)[\FF{x_2} \mapsto 0] \Big)\cdot (1-\beta_2)\\
\end{align*}
By assumption, $x_2$ does not appear in $\beta_1$, so
\[\beta_1[\FF{x_2} \mapsto 1] = \beta_1[\FF{x_2} \mapsto 0] = \beta_1.\]
So this last expression is equal to 
\begin{align*}
&\Big(\alpha[\FF{x_1} \mapsto 1,\FF{x_2} \mapsto 1] \cdot \beta_1\\
&\quad +\alpha[\FF{x_1} \mapsto 0,\FF{x_2} \mapsto 1] \cdot (1-\beta_1) \Big)\cdot \beta_2\\
& \quad + \Big(\alpha[\FF{x_1} \mapsto 1,\FF{x_2} \mapsto 0] \cdot \beta_1\\
& \quad + \alpha[\FF{x_1} \mapsto 0,\FF{x_2} \mapsto 0] \cdot (1-\beta_1) \Big)\cdot (1-\beta_2)\\
&= \alpha[\FF{x_1} \mapsto 1,\FF{x_2} \mapsto 1] \cdot \beta_1\beta_2\\
& \quad +\alpha[\FF{x_1} \mapsto 0,\FF{x_2} \mapsto 1] \cdot (1-\beta_1)\beta_2\\
& \quad + \alpha[\FF{x_1} \mapsto 1,\FF{x_2} \mapsto 0] \cdot \beta_1(1-\beta_2)\\
&\quad + \alpha[\FF{x_1} \mapsto 0,\FF{x_2} \mapsto 0] \cdot (1-\beta_1)(1-\beta_2).
\end{align*}
Expanding $\alpha[\FF{x_2} \mapsto \beta_2][\FF{x_1} \mapsto \beta_1]$ in the same way yields the exact same result, so we conclude that $\alpha[\FF{x_1} \mapsto \beta_1][\FF{x_2} \mapsto \beta_2] = \alpha[\FF{x_2} \mapsto \beta_2][\FF{x_1} \mapsto \beta_1]$. \qed

\subsection*{Proof of Theorem \ref{thm:realfunc}}

Let $\recht{Map}(\BB^X,\RR)$ be the set of functions $\BB^X \rightarrow \RR$. Consider the map $\varrho\colon \SFPA(X) \rightarrow \recht{Map}(\BB^X,\RR)$ given by
\[
\varrho(g)(c) = g[\forall x \in X\colon \FF{x} \mapsto c_x]
\]
for all $g \in \SFPA(X)$ and $c \in \BB^X$. We will show that this map is bijective; this completes the proof of the theorem, as its inverse is then the sought map $f \mapsto \mon{f}$. 

Let $\preceq$ be the partial order on $\BB^X$ given by $c \preceq c'$ if and only if $c_x \leq c'_x$ for all $x \in X$. Furthermore, let $\BB^X = \{c^1,\ldots,c^{M}\}$ be an enumeration of $\BB^X$ (so $M = 2^{|X|}$) such that $c_i \preceq c_j$ implies $i \leq j$ for all $i,j \leq M$. Then $\recht{Map}(\BB^X,\RR)$ can be identified with $\RR^M$, sending $f$ to the vector $(f(c_1),\ldots,f(c_M))^{\intercal} \in \RR^M$. We can also identify $\SFPA(X)$ with $\RR^M$, by identifying $g$ with $(g_{X_1},\ldots,g_{X_M})$; here $X^i = \{x \in X \mid c^i_X = \tone\}$. By construction, one has $X_i \subseteq X_j$ if and only if $c^i \preceq c^j$.

Under these identifications, we can regard $\varrho$ as a map $\RR^M \rightarrow \RR^M$; we now investigate what form this map takes. Note that
\begin{align*}
\varrho(g)(c^i) &= \sum_{X' \subseteq X} g_{X'}\left(\prod_{x \in X'} c^i_x\right).
\end{align*}
The latter product equals $1$ if $X' \subseteq X_i$, and $0$ otherwise; hence
\begin{align}
\varrho(g)(c^i) &= \sum_{j\colon X_j \subseteq X_i} g_{X_j}. \label{eq:uptri}
\end{align}
Thus $\varrho$ is a linear map $\RR^M \rightarrow \RR^M$ and can be represented by a matrix $E$. By \eqref{eq:uptri}, one has
\[
E_{i,j} = \begin{cases}
1, \textrm{ if $c_i \preceq c_j$},\\
0, \textrm{ otherwise.}
\end{cases}
\]
In particular, $E$ is lower triangular and the diagonal entries are all $1$. It follows that $E$ is invertible, hence $\varrho$ is bijective.

\subsection*{Proof of Lemma \ref{lem:mod}}

Fix a $\vec{c}'' \in \BB^{\CBE{T''}}$, and let $\tilde{v}$ be the node in $T''$ that replaced $v$ (i.e., that has the function of $\RRR{T'}$). To prove the theorem, we need to show that
\begin{align}
&\mon{U(T)}[\FF{v} \mapsto \mon{U(T')}][\forall x \in \CBE{T''}\colon \FF{x} \mapsto c''_x]\nonumber\\
&= U(T'')(\vec{c}''). \label{eq:modproof1}
\end{align}
To prove this, note that we have
\begin{align*}
&\mon{U_T}[\FF{v} \mapsto \mon{U(T')}][\forall x \in \CBE{T''}\colon \FF{x} \mapsto c''_x]\\
&= \Big(\mon{U_T}[\FF{v} \mapsto 1]\cdot \mon{U_{T'}}\\
&\quad +\mon{U_{T}}[\FF{v} \mapsto 0]\cdot (1-\mon{U_{T'}})\Big)[\forall x \in \CBE{T''}\colon \FF{x} \mapsto c''_x]\\
&= \mon{U_T}[\FF{v} \mapsto 1,\forall x \in \CBE{T''}\colon \FF{x} \mapsto c''_x] \\
&\quad \quad \cdot \mon{U_{T'}}[\forall x \in \CBE{T''}\colon \FF{x} \mapsto c''_x]\\
&\quad + \mon{U_T}[\FF{v} \mapsto 0,\forall x \in \CBE{T''}\colon \FF{x} \mapsto c''_x] \\
&\quad \quad \cdot (1-\mon{U_{T'}}[\forall x \in \CBE{T''}\colon \FF{x} \mapsto c''_x]).
\end{align*}
Here we used Lemmas \ref{lem:sub} and \ref{lem:subarit}, as well as the fact that $(c''_x)^2 = c''_x$ for all $x$. Now define $\vec{c}^0,\vec{c}^1 \in \BB^{\CBE{T}}$ and $\vec{c}' \in \BB^{\CBE{T'}}$ by
\begin{align*}
c^0_x &= \begin{cases}
c''_x, & \textrm{ if $x \neq v$},\\
0, & \textrm{ if $x = v$},
\end{cases}\\
c^1_x &= \begin{cases}
c''_x, & \textrm{ if $x \neq v$},\\
1, & \textrm{ if $x = v$},
\end{cases},\\
c'_x &= c''_x.
\end{align*}
Note that this is well-defined since $\CBE{T'} \subseteq \CBE{T''}$ and $\CBE{T} \subseteq \CBE{T''} \cup \{v\}$. In this formulation we find that
\begin{align}
&\mon{U(T)}[\FF{v} \mapsto \mon{U(T')}][\forall x \in \CBE{T''}\colon F{x} \mapsto c''_x]\nonumber\\
&=  \mon{U_T}[\forall x \in \CBE{T}\colon \FF{x} \mapsto c^1_x] \nonumber\\
&\quad \quad \cdot \mon{U_{T'}}[\forall x \in \CBE{T'}\colon \FF{x} \mapsto c'_x] \nonumber \\
&\quad+\mon{U_T}[\forall x \in \CBE{T}\colon \FF{x} \mapsto c^0_x] \nonumber\\
& \quad \quad \cdot (1-\mon{U_{T'}}[\forall x \in \CBE{T'}\colon \FF{x} \mapsto c'_x])\nonumber\\
&= U_T(\vec{c}^1)U_{T'}(\vec{c}')+U_T(\vec{c}^0)(1-U_T(\vec{c}')). \label{eq:modproof2}
\end{align}
Let $\vec{F}'' \in \BB^{\BE{T''}}$ be the random variable of Definition \ref{def:cbestruc}.2. By our assumption that $V \cap \BE{T'} = \varnothing$, we know that $\BE{T''}$ is the disjoint union of $\BE{T}$ and $\BE{T'}$. Thus we can write $F'' = (\vec{F},\vec{F}')$, with $F$ and $F'$ independent random variables in $\BB^{\BE{T}}$ and $\BB^{\BE{T'}}$, respectively. Furthermore, all paths from $\RRR{T}$ to elements of $\BE{T}$ pass through $\tilde{v}$, and so the random variables $\vec{F}$ and $\struc{T''}(\tilde{v},\vec{F}'',\vec{c}'')$ are independent. It follows that
\begin{align}
&U(T'')(\vec{c}'') \nonumber\\
&= \prob(\struc{T}(\RRR{T''},\vec{F}'',\vec{c}'') = \tone) \nonumber \\
&= \prob(\struc{T''}(\RRR{T''},\vec{F}'',\vec{c}'') = \tone \mid \struc{T''}(\tilde{v},\vec{F}'',\vec{c}'') = \tone)\nonumber \\
& \quad \quad \cdot \prob(\struc{T''}(\tilde{v},\vec{F}'',\vec{c}'') = \tone) \nonumber\\
&\quad + \prob(\struc{T''}(\RRR{T''},\vec{F}'',\vec{c}'') = \tone \mid \struc{T''}(\tilde{v},\vec{F}'',\vec{c}'') = \tzero) \nonumber \\
&\quad \quad \cdot \prob(\struc{T''}(\tilde{v},\vec{F}'',\vec{c}'') = \tzero)\nonumber\\
&= \prob(\struc{T}(\RRR{T},F,\vec{c}^1) = \tone)\cdot \prob(\struc{T'}(\RRR{T'},\vec{F}',\vec{c}') = \tone)\nonumber\\
&\quad \quad + \prob(\struc{T}(\RRR{T},\vec{F},\vec{c}^0) = \tone)\cdot \prob(\struc{T'}(\RRR{T'},\vec{F}',\vec{c}') = \tzero)\nonumber\\
&= U_T(\vec{c}^1)U_{T'}(\vec{c}')+U_T(\vec{c}^0)(1-U_T(\vec{c}')). \label{eq:modproof3}
\end{align}
Combining \eqref{eq:modproof2} and \eqref{eq:modproof3}, we have shown \eqref{eq:modproof1} and the proof is complete. \qed

\subsection*{Proof of Theorem \ref{thm:mainext}}

To prove this result, we prove an extension that is easier to handle via induction. For a node $v$, we write $g_{v,0}$ for the value of $g_v$ at it initialization in Algorithm \ref{alg:SQFU}, i.e., at line 6,9 or 11. Furthermore, if $\gamma(v) \neq \tBE$, let $w_1,\ldots,w_n$ be the elements of $\mathcal{S}_v = \{w \in V \mid \idom(w) = v\}$, in the order in which they are picked in line 15; then $g_{v,i}$ is the value of $g_v$ after $i$ iterations of the loop 14--18. Thus $g_{v,\infty} = g_{v,0}$ if $\gamma(v) = \tBE$, and $g_{v,\infty} = g_{v,n}$ otherwise, in the notation above.

\begin{theorem}
For $v \in V$, define $\mathcal{I}_v = \{w \in V \mid w \prec v \prec \idom(w)\}$.
\begin{enumerate}
\item One has $g_{v,\infty} = \mon{U(T_v[\mathcal{I}_v])}$.
\item Suppose $\gamma(v) \neq \tBE$ and let $w_1,\ldots,w_n$ be the elements of $\mathcal{S}_v$, in the order in which they are picked in line 15. Then
$g_{v,i} = \mon{U(T_v[\mathcal{I}_v \cup \{w_{i+1},\ldots,w_n\}])}$.
\end{enumerate}
\end{theorem}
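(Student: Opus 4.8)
The plan is to prove both statements simultaneously by strong induction on $v$ with respect to $\preceq$, so that when treating $v$ I may assume statement~1 for every $u \prec v$. Statement~1 for $v$ then drops out of statement~2 by taking $i = n$ (for which $\{w_{n+1},\ldots,w_n\} = \varnothing$), while the case $\gamma(v) = \tBE$ is handled directly: here $T_v$ is the single basic event $v$, one has $\mathcal{I}_v = \varnothing$, and $\mon{U(T_v[\varnothing])} = \pp(v) = g_{v,\infty}$.

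For a gate $v$ I would prove statement~2 by an inner induction on $i$. For the base case $i=0$ I would first establish the auxiliary fact that every child $w \in \ch(v)$ lies in $\mathcal{I}_v \cup \mathcal{S}_v$: since $v$ is a parent of $w$, the dominator $\idom(w)$ must lie on the path $\RRR{T} \to v \to w$ and cannot lie on the edge $v \to w$, so $v \preceq \idom(w)$; the case $\idom(w) = v$ gives $w \in \mathcal{S}_v$ and the case $v \prec \idom(w)$ gives $w \in \mathcal{I}_v$. Consequently, in $T_v[\mathcal{I}_v \cup \mathcal{S}_v]$ all children of $v$ become CBEs, so the only nodes reachable from $v$ are $v$ and $\ch(v)$; reading off the unreliability of this single gate over controllable children yields exactly the initialisation $g_{v,0}$ (namely $\prod_{w \in \ch(v)} \FF{w}$ for $\tAND$ and $1 - \prod_{w \in \ch(v)}(1-\FF{w})$ for $\tOR$).

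For the inductive step I would use that $g_{v,i} = g_{v,i-1}[\FF{w_i} \mapsto g_{w_i,\infty}]$, substitute the inner hypothesis $g_{v,i-1} = \mon{U(T_v[\mathcal{I}_v \cup \{w_i,\ldots,w_n\}])}$ together with the outer hypothesis $g_{w_i,\infty} = \mon{U(T_{w_i}[\mathcal{I}_{w_i}])}$, and then invoke Theorem~\ref{lem:mod} to turn the substitution into a quasimodular composition:
\[
g_{v,i} = \mon{U\big(T_v[\mathcal{I}_v \cup \{w_i,\ldots,w_n\}]\,[\,w_i \mapsto T_{w_i}[\mathcal{I}_{w_i}]\,]\big)}.
\]
It then remains to identify this composition with $T_v[\mathcal{I}_v \cup \{w_{i+1},\ldots,w_n\}]$. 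The combinatorial heart of this is the set identity, for descendants $u \prec w_i$,
\[
u \in \mathcal{I}_{w_i} \iff u \in \mathcal{I}_v \cup \{w_{i+1},\ldots,w_n\},
\]
which I would prove from Lemma~\ref{lem:idom}: applying it to $u \prec w_i$ gives $\idom(u) \preceq w_i$ or $v = \idom(w_i) \preceq \idom(u)$, and since $u \in \mathcal{I}_{w_i}$ forces $w_i \prec \idom(u)$ we are in the latter case, which splits into $\idom(u) = v$ (forcing $u \in \mathcal{S}_v$, and since the $w_j$ are picked $\preceq$-maximal first, $u \prec w_i$ forces $u = w_j$ with $j > i$) and $v \prec \idom(u)$ (giving $u \in \mathcal{I}_v$ as $u \prec v$); the converse is immediate from $w_i \prec v$. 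This identity shows that expanding the CBE $w_i$ back into $T_{w_i}[\mathcal{I}_{w_i}]$ reinstates precisely the part of $T_v[\mathcal{I}_v \cup \{w_{i+1},\ldots,w_n\}]$ lying below $w_i$, with matching CBE labels.

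The main obstacle I anticipate is the graph-theoretic bookkeeping in this last identification, rather than the algebra. Concretely, before applying Theorem~\ref{lem:mod} one must verify that $T = T_v[\mathcal{I}_v \cup \{w_i,\ldots,w_n\}]$, $T' = T_{w_i}[\mathcal{I}_{w_i}]$ and the node $w_i$ satisfy the hypotheses of Definition~\ref{def:mod} — disjoint basic events, and matching children on shared internal nodes — and afterwards one must check that the node sets, edges and CBE designations of the composition agree with those of $T_v[\mathcal{I}_v \cup \{w_{i+1},\ldots,w_n\}]$, taking particular care of descendants of $w_i$ that are also reachable from $v$ along paths avoiding $w_i$. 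I expect all of these to follow from the set identity above together with repeated use of Lemma~\ref{lem:idom} and the observation that $w_i \prec v$ forbids $w_i$ from lying on any path $\RRR{T} \to v$, but verifying them carefully is the genuinely laborious part of the argument.
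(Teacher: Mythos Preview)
Your proposal is correct and follows essentially the same approach as the paper: double induction on $v$ then $i$, the observation $\ch(v) \subseteq \mathcal{I}_v \cup \mathcal{S}_v$ for the base $i=0$, the set identity $\mathcal{I}_{w_i} = (\mathcal{I}_v \cup \mathcal{S}_v) \cap \recht{desc}(w_i)$ established via Lemma~\ref{lem:idom}, the BE-disjointness check, and the appeal to Theorem~\ref{lem:mod}. The paper's write-up is terser on the graph-theoretic bookkeeping you flag as the laborious part, but the argument is the same.
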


\begin{proof}
We prove this by induction on $v$, and within a given $v$ by induction on $i$. If $\gamma(v) = \tBE$, then $T_v$ consists of a single BAS with failure probability $\pp(v)$; as such $U(T_v) = \pp(v) = g_v$. This proves the first statement for BEs.

Now suppose $\gamma(v) = \tOR$; the case that $\gamma(v) = \tAND$ is completely analogous. We will prove statement 2, as statement 1 is just the special case $i = n$. We start with the case $i=0$. Since $\mathcal{S}_v \cup \mathcal{I}_v = \{w \in V \mid w \prec v \preceq \idom(w)\}$ and $v \preceq \idom(w)$ for every child $w$ of $v$, we have $\ch(v) \subseteq \mathcal{I}_v \cup \{w_1,\ldots,w_n\}$. It follows that $T' := T_v[\mathcal{I}_v \cup \{w_{i+1},\ldots,w_n\}]$ is a PCFT consisting of $v$, and all its children labeled $\tCBE$. Hence $U(T')(\vec{c}) = \max_{w \in \ch(v)} c_w$ for $\vec{c} \in \BB^{\ch(v)}$, which is represented by the polynomial $1-\prod_{w \in \ch(v)} (1-\FF{w})$. This proves the statement for $i = 0$.

Now suppose the statement is true for a given $i-1$, and let $T_1 = T_v[\mathcal{I} \cup \{w_{i},\ldots,w_n\}]$ and $T_2 = T_v[\mathcal{I} \cup \{w_{i+1},\ldots,w_n\}]$. Then $T_2$ is obtained by replacing the CBE $w_i$ in $T_1$ by the PCFT $T_3 = T_{w_i}[(\mathcal{I}_v \cup \{w_{i+1},\ldots,w_n\}) \cap \recht{desc}(w_i)]$, where $\recht{desc}(w_i)$ denotes the set of descendants of $w_i$. Note that since we pick the $w_j$ in a reverse topological order, all $w_j$ that are descendants of $w_i$ satisfy $j > i$; hence
\[
(\mathcal{I}_v \cup \{w_{i+1},\ldots,w_n\}) \cap \recht{desc}(w_i) = (\mathcal{I}_v \cup \mathcal{S}_v) \cap \recht{desc}(w_i).
\]
We now prove that this set is equal to $\mathcal{I}_{w_i}$. If $w \in (\mathcal{I}_v \cup \mathcal{S}_v) \cap \recht{desc}(w_i)$, then $w \prec w_i$ and $w_i \prec v \preceq \idom(w_i)$; hence $w \in \mathcal{I}_{w_i}$, proving one inclusion. Conversely, if $w \in \mathcal{I}_{w_i}$, then $w_i \prec \idom(w)$; hence $\idom(w_i) \preceq \idom(w)$ by Lemma \ref{lem:idom}. Hence $v \preceq \idom(w_i) \preceq \idom (w)$ and $w \in (\mathcal{I}_v \cup \mathcal{S}_v) \cap \recht{desc}(w_i)$, proving the other conclusion. We conclude that $T_3 = T_{w_i}[\mathcal{I}_{w_i}]$.

Now suppose that $T_3$ and $T_1$ share a BE $w$. If this were the case, then there is a path $v \rightarrow w$ not through $w_i$, namely any such path in $T_2$. Since $w \preceq w_i$ we conclude that $w_i \prec \idom(w)$. But then $w \in \mathcal{I}_{w_i}$; since $T_3 = T_{w_i}[\mathcal{I}_{w_i}]$ this means that $\gamma(w) = \tCBE$ rather than $\tBE$, which is a contradiction. We conclude that no such $w$ exist. Hence we can invoke Lemma \ref{lem:mod} and conclude that
\begin{align*}
g_{v,i} &= g_{v,i-1}[\FF{w_i} \mapsto g_{w,\infty}]\\
&= \mon{U(T_1)}[\FF{w_i} \mapsto \mon{U(T_3)}]\\
&= \mon{U(T_2)},
\end{align*}
as was to be shown.
\end{proof}

\end{document}